\documentclass{article}
\usepackage{authblk}
\usepackage[utf8]{inputenc}
\usepackage{fullpage}
\usepackage{amssymb}
\usepackage{amsmath}
\usepackage{amsthm}
\usepackage{physics}
\usepackage{xcolor}
\usepackage{enumitem}
\usepackage[most]{tcolorbox}
\tcbuselibrary{theorems}
\usepackage[bookmarks, colorlinks=true, urlcolor=blue, linkcolor=black, citecolor=blue]{hyperref}
\usepackage{bm}
\usepackage{tikz}
\usepackage{graphicx}

\title{Quantum policy gradient algorithms}
\author[1]{Sofiene Jerbi}
\author[2]{Arjan Cornelissen}
\author[2]{M\={a}ris Ozols}
\author[3]{Vedran Dunjko}
\affil[1]{\normalsize Institute for Theoretical Physics, University of Innsbruck}
\affil[2]{\normalsize QuSoft and University of Amsterdam}
\affil[3]{\normalsize applied Quantum algorithms (aQa), Leiden University}
\date{\today}

\newtheorem{theorem}{Theorem}[section]
\newtheorem{lemma}[theorem]{Lemma}
\newtheorem{definition}[theorem]{Definition}
\newtheorem{corollary}[theorem]{Corollary}

\newtcbtheorem[number within=section]{thm}{Theorem}%
{colframe=gray!50!black,fonttitle=\bfseries}{}

\makeatletter
\newtheorem{repeatlem@}{Lemma}
\newenvironment{repeatlem}[1]{%
    \def\therepeatlem@{\ref{#1}}
    \repeatlem@
}
{\endrepeatlem@}
\makeatother

\newcommand{\A}{\ensuremath{\mathcal{A}}}
\newcommand{\B}{\ensuremath{\mathcal{B}}}
\newcommand{\E}{\ensuremath{\mathbb{E}}}
\newcommand{\G}{\ensuremath{\mathcal{G}}}
\renewcommand{\H}{\ensuremath{\mathcal{H}}}
\newcommand{\M}{\ensuremath{\mathcal{M}}}
\newcommand{\N}{\ensuremath{\mathbb{N}}}
\renewcommand{\O}{\ensuremath{\mathcal{O}}}
\renewcommand{\P}{\ensuremath{\mathcal{P}}}
\newcommand{\Pbb}{\ensuremath{\mathbb{P}}}
\newcommand{\R}{\ensuremath{\mathbb{R}}}
\renewcommand{\S}{\ensuremath{\mathcal{S}}}
\newcommand{\X}{\ensuremath{\mathcal{X}}}
\newcommand{\Z}{\ensuremath{\mathbb{Z}}}
\newcommand{\alphas}{\ensuremath{\bm{\alpha}}}
\newcommand{\omegas}{\ensuremath{\bm{\omega}}}
\newcommand{\phis}{\ensuremath{\bm{\phi}}}

\newcommand{\params}{\ensuremath{\bm{\theta}}}
\newcommand{\weights}{\ensuremath{\bm{w}}}
\newcommand{\policy}{\ensuremath{\pi_{\params}}}
\newcommand{\valuefct}{\ensuremath{V_{\policy}}}
\newcommand{\Rmax}{\ensuremath{\abs{R}_\textrm{max}}}
\newcommand{\Rwd}{\ensuremath{\mathcal{R}}}
\renewcommand{\grad}{\ensuremath{\nabla_{\params}}}

\usepackage[bottom]{footmisc}

\begin{document}

\maketitle

\begin{abstract}
\noindent Understanding the power and limitations of quantum access to data in machine learning tasks is primordial to assess the potential of quantum computing in artificial intelligence. Previous works have already shown that speed-ups in learning are possible when given quantum access to reinforcement learning environments. Yet, the applicability of quantum algorithms in this setting remains very limited, notably in environments with large state and action spaces. In this work, we design quantum algorithms to train state-of-the-art reinforcement learning policies by exploiting quantum interactions with an environment. However, these algorithms only offer full quadratic speed-ups in sample complexity over their classical analogs when the trained policies satisfy some regularity conditions. Interestingly, we find that reinforcement learning policies derived from parametrized quantum circuits are well-behaved with respect to these conditions, which showcases the benefit of a fully-quantum reinforcement learning framework.
\end{abstract}

\section{Introduction}

When studying the potential advantages of quantum computing in machine learning, a natural question that arises is whether quantum algorithms that exploit \emph{quantum access} to data can speed up learning.\break In the context of supervised learning, this led to the development of algorithms based on quantum RAMs, which can achieve high-degree polynomial improvements over their classical analogs \cite{chia22}. In reinforcement learning, where we consider learning agents interacting with task environments, the question becomes: can quantum interactions with an environment, and in particular the ability to explore several trajectories in superposition, be beneficial for a learning agent. In recent years, several works have approached this question from a variety of angles \cite{dunjko17}: based on Grover's algorithm \cite{grover96}, some works have for instance shown that searching for an optimal sequence of actions in an environment can be done using quadratically fewer interactions given the appropriate oracular access to the environment \cite{dunjko16,saggio21,hamann21}. Other works have considered the more general problem of finding the optimal policy in a Markov Decision Process (MDP), and have found that up to quadratic speed-ups in the number of interactions are also possible, again given the proper oracular access \cite{wang21,wang21b,ronagh19,cherrat22}. Finally, tailored MDP environments (based, e.g., on Simon's problem) have also been introduced, which allow for exponential quantum speed-ups in learning times compared to the best classical agents \cite{dunjko17b}.

Yet, all the quantum algorithms that have been proposed in this quantum-accessible setting remain inefficient in the most well-publicised use cases of reinforcement learning, such as Go \cite{silver17}, city navigation \cite{mirowski18}, and computer games \cite{mnih15}: environments with large state-action spaces. Indeed, aside from the task-specific algorithms of Ref.~\cite{dunjko17b}, the proposed algorithms scale at best as the square root of the size of the state-action space, which is intractable in most modern-day applications that deal for instance with image-based inputs. In the classical literature, modern approaches to reinforcement learning in large spaces commonly replace the explicit storage of a policy (and/or a value function) in a table of values by a parametrized model (e.g., a deep neural network), whose parameters $\params$ have a much smaller size than the state-action space. One of the earliest approaches based on such parametrized models is that of \emph{policy gradient algorithms} \cite{williams92,sutton00}. This approach frames reinforcement learning as a direct optimization problem, where the expected rewards (or value function) $\valuefct(s_0)$ of a given policy $\policy$ starting its interactions in a state $s_0$ is optimized via gradient ascent on the policy parameters $\params$. Therefore, the core task in this approach is to estimate the gradient $\grad\valuefct(s_0)$ to a certain error $\varepsilon$ in the $\ell_\infty$-norm. For this task, two approaches are common: \emph{numerical} gradient estimation \cite{kohl04}, where the value function is evaluated at different parameter settings $\params'$ centered around $\params$, that are combined to estimate the gradient at $\params$ (using, e.g., a central difference method), and \emph{analytical} gradient estimation \cite{sutton00}, using a formulation of this gradient as a function of the rewards and the gradients of the policy $\policy$, averaged over trajectories generated by $\policy$ (i.e., a Monte Carlo method).

Concurrently in the last few years, several works have introduced quantum parametrized models, known most commonly as parametrized or variational quantum circuits, that could take the place of deep neural networks in both policy-based \cite{jerbi21,sequeira22,chen22} and value-based \cite{chen20,lockwood20,wu20,skolik21} reinforcement learning. While evaluated on a quantum computer, these models are however trained via classical interaction with the environment using, e.g., a classical policy gradient method.

In this work, we present quantum algorithms that speed up both the numerical and analytical gradient estimation approaches to policy gradient methods. These algorithms exploit an appropriately defined oracular access to the environment that allows to explore several trajectories in superposition, combined with subroutines for numerical gradient estimation \cite{gilyen19,cornelissen19} and multivariate Monte Carlo estimation \cite{cornelissen21,cornelissen22}. Both these subroutines are however known to offer full quadratic speed-ups only in certain regimes, that depend in our setting on the smoothness of the value function $\valuefct(s_0)$ and on the $\ell_p$-norm of its gradient $\grad\valuefct(s_0)$, respectively. Conveniently, we also identify families of parametrized quantum policies $\policy$ previously studied in the literature \cite{jerbi21} that satisfy the conditions of these regimes. We therefore end up with quantum policy gradient algorithms to train quantum policies, i.e., a fully quantum approach to reinforcement learning in large spaces.

\section{Preliminaries}

In this section, we present the main tools and concepts that we need to design our quantum policy gradient algorithms. We start by introducing policy gradient methods in Sec.~\ref{sec:PGM}. We then define the general oracle types that we consider in this work in Sec.~\ref{sec:input-models}, which allows us to properly define the notion of quantum access to a reinforcement learning environment in Sec.~\ref{sec:quantum-access-env}. We define the parametrized quantum policies that we apply our quantum policy gradient algorithms to in Sec.~\ref{sec:quantum-policies}. And finally, we present the core subroutines used in our quantum algorithms in Sec.~\ref{sec:core-subroutines}.

\subsection{Policy gradient methods\label{sec:PGM}}

At the core of policy gradient methods are two ingredients: a parametrized policy $\policy$, that governs an agent's actions in an environment, and its associated value function $\valuefct$, which evaluates the long-term performance of this policy in the environment. The policy $\policy(\cdot|s)$ is a conditional probability distribution over actions given a state $s$, parametrized by a vector of parameters $\params \in \mathbb{R}^d$. When acting with a given policy in the environment, the agent experiences sampled trajectories (or episodes) $\tau = (s_0, a_0, r_0, s_1, \ldots)$ composed of states, actions and rewards that depend both on the policy of the agent and the environment dynamics (see Sec.~\ref{sec:quantum-access-env} for more details). The standard figure of merit used to assess the performance of a policy $\policy$ is called the value function $\valuefct(s_0)$ and is given by the expected sum of rewards (or return) $R(\tau)$ collected in a trajectory:
\begin{equation}\label{eq:value-function}
    \valuefct(s_0) = \mathbb{E}_{\policy, P_{E}} \left[ \sum_{t=0}^{T-1} \gamma^t r_t \right] = \mathbb{E}_{\policy, P_{E}} \left[ R(\tau) \right]
\end{equation}
where $s_0$ is the initial state of the agent's interaction $\tau$ with the environment and $P_E$ a description of the environment dynamics (e.g., in the form of an MDP, see Def.~\ref{def:MDP}). Each episode of interaction has a horizon (or length) $T \in \mathbb{N} \cup \{\infty\}$ and the returns $R(\tau)$ involve a discount factor $\gamma \in [0,1]$ that allows, when $\gamma < 1$, to avoid diverging value functions for an infinite horizon, i.e., $T = \infty$.

Policy gradient methods take a direct optimization approach to RL: starting from an initial policy $\policy$, its parameters are iteratively updated such as to maximize its associated value function $\valuefct(s_0)$, via gradient ascent. For this method to be applicable, one needs to evaluate the gradient of the value function $\grad\valuefct$, up to some error $\varepsilon$ in $\ell_\infty$-norm to be specified.

\subsubsection{Numerical gradient estimation}

The most straightforward approach to estimate the value function of a policy is via a Monte Carlo approach: by collecting $N$ episodes $\tau_i = (s_0^{(i)}, a_0^{(i)}, r_0^{(i)}, s_1^{(i)} \ldots)$ governed by $\policy$, one can compute for each of these the discounted return $R(\tau)$ appearing in Eq.\ (\ref{eq:value-function}) and average the results. The resulting value
\begin{equation}\label{eq:mc-value-function}
	\widetilde{V}_{\policy}(s) =\frac{1}{N}\sum_{i=1}^N \sum_{t=0}^{T-1} \gamma^t r_{t}^{(i)}
\end{equation}
is a Monte Carlo estimate of the value function.

With the capacity to estimate the value function, we can also estimate its gradient using numerical methods. In its simplest form, a finite-difference method simply evaluates $\widetilde{V}_{\policy}(s_0)$ and $\widetilde{V}_{\pi_{\bm{\theta + \delta e_i}}}(s_0)$ for $\delta>0$ and $e_i = (0,\ldots,0,1_i,0\ldots,0)$ a unit vector with support on the $i$-th parameter in $\params$, and returns the estimate:
\begin{equation}
	\partial_i \valuefct(s_0) \approx \frac{\widetilde{V}_{\pi_{\bm{\theta + \delta e_i}}}(s_0)-\widetilde{V}_{\policy}(s_0)}{\delta}.
\end{equation}
Even though more elaborate finite difference methods exist (that we will use in Sec.~\ref{sec:num-grad}), they inherently have a sample complexity (in terms of the number of interactions with the environment) that scales linearly in the dimension of $\params$.

\subsubsection{Analytical gradient estimation\label{sec:analytical-grad-prelim}}

Perhaps one of the most appealing aspects of policy gradient methods is that the gradients of value functions also have an analytical formulation whose evaluation has a sample complexity only logarithmic in the dimension of $\params$ \cite{kakade03}. This analytical formulation is known as the policy gradient theorem:

\begin{theorem}[Policy gradient theorem \cite{sutton00}]
	Given a policy $\policy$ that generates trajectories $\tau = (s_0, a_0, r_0, s_1, \ldots)$ in a reinforcement learning environment with time horizon $T\in\N\cup\{\infty\}$, the gradient of the value function $\valuefct$ with respect to $\params$ is given by
	\begin{equation}
		\grad\valuefct (s_0) = \mathbb{E}_{\tau}  \left[ \sum_{t=0}^{T-1}\grad\log{\policy(a_t|s_t)} \sum_{t'=0}^{T-1} \gamma^{t'} r_{t'} \right].
	\end{equation}
\end{theorem}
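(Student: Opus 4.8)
The plan is to establish the policy gradient theorem by expressing the value function as an explicit integral (or sum) over trajectories weighted by the probability of each trajectory, and then differentiating this expression with respect to $\params$ using the so-called log-derivative trick. First I would write the probability of a trajectory $\tau = (s_0, a_0, r_0, s_1, \ldots)$ under policy $\policy$ in an MDP as a product of the policy factors and the environment transition factors:
\begin{equation*}
    \Pbb_{\policy}(\tau) = \prod_{t=0}^{T-1} \policy(a_t|s_t)\, P_E(s_{t+1}|s_t, a_t),
\end{equation*}
so that $\valuefct(s_0) = \sum_{\tau} \Pbb_{\policy}(\tau)\, R(\tau)$, where $R(\tau) = \sum_{t'=0}^{T-1}\gamma^{t'} r_{t'}$. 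The crucial observation is that only the policy factors $\policy(a_t|s_t)$ depend on $\params$, while the environment dynamics $P_E$ do not.

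Next I would differentiate under the sum, obtaining $\grad\valuefct(s_0) = \sum_{\tau} \grad\Pbb_{\policy}(\tau)\, R(\tau)$, and apply the identity $\grad\Pbb_{\policy}(\tau) = \Pbb_{\policy}(\tau)\,\grad\log\Pbb_{\policy}(\tau)$, which holds wherever $\Pbb_{\policy}(\tau) > 0$. This recasts the gradient as an expectation,
\begin{equation*}
    \grad\valuefct(s_0) = \mathbb{E}_{\tau}\left[ \grad\log\Pbb_{\policy}(\tau)\, R(\tau) \right].
\end{equation*}
Taking the logarithm turns the product into a sum, $\log\Pbb_{\policy}(\tau) = \sum_{t=0}^{T-1}\bigl(\log\policy(a_t|s_t) + \log P_E(s_{t+1}|s_t,a_t)\bigr)$, and since the environment terms are constant in $\params$ their gradients vanish, leaving $\grad\log\Pbb_{\policy}(\tau) = \sum_{t=0}^{T-1}\grad\log\policy(a_t|s_t)$. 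Substituting this in yields exactly the claimed formula.

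I expect the main obstacle to lie not in the algebra, which is routine, but in justifying the interchange of differentiation and summation (or integration) over trajectories, especially in the infinite-horizon case $T = \infty$. There one must argue that the series converges uniformly enough—leaning on the discount factor $\gamma < 1$ together with boundedness of the rewards and suitable regularity of $\policy$ in $\params$—to permit differentiating term by term. I would handle this either by invoking dominated convergence with a $\gamma^t$-decaying envelope, or simply by stating the result for finite $T$ and remarking that the infinite-horizon case follows by taking limits under the standard assumption that $\valuefct(s_0)$ is finite and differentiable. A secondary technical point worth flagging is the requirement that $\policy(a_t|s_t) > 0$ on the support of the trajectory distribution, so that $\log\policy$ and its gradient are well-defined; this is guaranteed for the smooth parametrized policies (e.g.\ softmax-type distributions) considered later in the paper.
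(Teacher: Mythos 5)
Your proposal is correct and follows essentially the same route as the paper's own proof in Appendix~\ref{appdx:derivation-pgt}: write $\valuefct(s_0)=\sum_\tau P_{\params}(\tau)R(\tau)$, apply the log-derivative trick, and use the $\params$-independence of the environment dynamics to reduce $\grad\log P_{\params}(\tau)$ to $\sum_t \grad\log\policy(a_t|s_t)$. Your additional remarks on interchanging differentiation with the (possibly infinite) sum and on the positivity of $\policy$ are sensible refinements that the paper glosses over, but they do not change the argument.
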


\noindent A simple derivation of this Theorem can be found in Appendix \ref{appdx:derivation-pgt}. Essentially, due to the so-called ``log-likelihood trick'' \cite{silver15}, the differentiation with respect to the policy parameters can be made to act solely on the random variables ``inside'' the expected value, while leaving the probability distribution behind this expected value unchanged. This means that the gradient of the value function can, similarly to the value function itself, be estimated via Monte Carlo sampling of trajectories governed by a fixed $\policy$ and environment-independent computations (i.e., the evaluation of $\grad\log{\policy(a_t|s_t)}$).

\subsection{Input models\label{sec:input-models}}

To design our quantum algorithms, we need to define access models to the environment as well as the policy $\policy$ to be trained. We do this in terms of oracles that can be queried in superposition. Throughout this manuscript, we will be dealing with several types of such oracles, all defined in this section. 

\begin{definition}[Oracle types]\label{def:oracles}
Let $\X$ be a finite set whose elements $x\in\X$ can be encoded as mutually orthogonal states $\ket{x}$, and let $f:\X \mapsto [0,B]$ be a function acting on this set, whose output is bounded by some $B\in\R$. We define different types of oracle access to $f$:
\begin{enumerate}
	\item \textbf{Binary oracle}: $f(x)$ is encoded in an additional register using a binary representation of a desired precision:
	\begin{equation}
		\B_f : \ket{x}\ket{0} \mapsto \ket{x}\ket{f(x)},
	\end{equation}
	\item \textbf{Phase oracle}: $f(x)$ is encoded in the phase of the input register:
	\begin{equation}
		O_f : \ket{x} \mapsto e^{i\frac{f(x)}{B}}\ket{x},
	\end{equation}
	\item \textbf{Probability oracle}: $f(x)$ is encoded in the amplitude of an additional qubit (possibly entangled to arbitrary states $\ket{\psi_{0}(x)}$ and $\ket{\psi_{1}(x)}$ of an additional register):
	\begin{equation}
		\widetilde{O}_f : \ket{x}\ket{0}\ket{0} \mapsto \ket{x}\left(\sqrt{\frac{f(x)}{B}}\ket{0}\ket{\psi_0(x)} + \sqrt{1-\frac{f(x)}{B}}\ket{1}\ket{\psi_1(x)}\right).
	\end{equation}
\end{enumerate}
\end{definition}

Clearly, having access to a binary oracle $\B_f$, we can easily convert it into a phase or probability oracle $O_f$ or $\widetilde{O}_f $, using one call to $\B_f$ first, then a single-qubit rotation or a phase gate controlled on $\ket{f(x)}$, and finally a call to $\B_f^{\dagger}$ to uncompute $\ket{f(x)}$.

We will also need a subroutine to convert probability oracles into phase oracles:
\begin{lemma}[Probability to phase oracle (Corollary 4.1 in \cite{gilyen19})]\label{lem:prob-to-phase}
Suppose that we are given a probability oracle $\widetilde{O}_f$ for $f:\X \rightarrow [0,B]$. We can implement a phase oracle $O_f$ up to operator norm error $\varepsilon$, with query complexity $\O(\log(1/\varepsilon))$, i.e., this many calls to $\widetilde{O}_f$ and its inverse.
\end{lemma}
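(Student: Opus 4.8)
The plan is to reduce the task to applying a smooth function of a block-encoded operator and then invoke quantum signal processing. Observe that the desired phase oracle is exactly the operator $O_f = \sum_{x\in\X} e^{i f(x)/B}\ket{x}\bra{x} = e^{iA}$, where $A := \diag\!\big(f(x)/B\big)$ is a Hermitian operator with eigenvalues in $[0,1]$. So it suffices to implement $e^{iA}$ to operator-norm error $\varepsilon$, which is a unit-time Hamiltonian-simulation problem for $A$.

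First I would construct a block-encoding of $A$ from $\widetilde{O}_f$. Since the $\ket{x}$ register is left invariant, we may write $\widetilde{O}_f = \sum_x \ket{x}\bra{x}\otimes U_x$ with $U_x\ket{0}\ket{0} = \sqrt{f(x)/B}\,\ket{0}\ket{\psi_0(x)} + \sqrt{1-f(x)/B}\,\ket{1}\ket{\psi_1(x)}$. Let $P = \ket{0}\bra{0}\otimes I$ be the projector onto the flag qubit being $\ket{0}$. Then $\bra{0}\bra{0} U_x^\dagger P\, U_x \ket{0}\ket{0} = f(x)/B$, so
\[
\left(\bra{0}\bra{0}\otimes I\right)\widetilde{O}_f^\dagger (2P - I)\,\widetilde{O}_f\left(\ket{0}\ket{0}\otimes I\right) = 2A - I .
\]
Hence $\widetilde{O}_f^\dagger (2P - I)\,\widetilde{O}_f$ is an exact block-encoding of the Hermitian operator $H := 2A - I$, whose eigenvalues lie in $[-1,1]$, and it uses a single query to $\widetilde{O}_f$ and one to $\widetilde{O}_f^\dagger$.

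Next, since $A = (H + I)/2$, I would implement $e^{iA} = e^{i/2}\,e^{iH/2}$ by applying quantum signal processing (QSVT) to the block-encoding of $H$ to realize the unitary $e^{iH/2}$, i.e., Hamiltonian simulation for the constant time $t = 1/2$, correcting the global phase $e^{i/2}$ at the end. Hamiltonian simulation of a block-encoded Hermitian operator for time $t$ to error $\varepsilon$ costs $\O(t + \log(1/\varepsilon))$ queries to the block-encoding; with $t$ constant this yields $\O(\log(1/\varepsilon))$ queries to $\widetilde{O}_f$ and its inverse, as claimed.

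The main obstacle is the QSVT step rather than the reduction. The target function $a \mapsto e^{ia}$ is complex, whereas a single QSP sequence implements a definite-parity real polynomial bounded by $1$ on $[-1,1]$. One must therefore separately approximate $\cos$ and $\sin$ by even and odd Chebyshev truncations of the Jacobi--Anger expansion, each of degree $\O(\log(1/\varepsilon))$ for constant $t$, and recombine them (by a linear combination of unitaries, or by generalized QSP) into $e^{iH/2}$, while tracking the extra flag qubits so that the overall block-encoding error stays $\O(\varepsilon)$. Verifying that these approximation and normalization errors accumulate to at most $\varepsilon$ in operator norm, and that the polynomial degree --- hence the query count --- remains logarithmic, is the only technical point; everything else is bookkeeping.
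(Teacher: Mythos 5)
The paper offers no proof of this lemma --- it is imported verbatim as Corollary~4.1 of Gily\'en et al.\ --- and your argument is a correct reconstruction along essentially the same lines as the cited proof: convert the probability oracle into a block-encoding of the diagonal operator with entries $f(x)/B$ (there, too, via a reflection of the form $\widetilde{O}_f^{\dagger}(2P-I)\widetilde{O}_f$) and run qubitization-based Hamiltonian simulation for constant time, whose Jacobi--Anger truncation degree $\O(\log(1/\varepsilon))$ gives the stated query count. The bookkeeping you leave implicit --- the global phase $e^{i/2}$, and the passage from block-encoding error of $e^{iH/2}$ to operator-norm error of the full unitary with ancillas restored, which at worst requires running at accuracy $\varepsilon^{2}$ and is absorbed into the logarithm --- is harmless.
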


\subsection{Quantum-accessible environments\label{sec:quantum-access-env}}

Inspired by previous work that considered the quantum-accessible reinforcement learning setting \cite{dunjko17b,wang21,wang21b,ronagh19,cherrat22}, we define oracular access to a specific type of reinforcement learning environments called Markov Decision Processes (MDPs) \cite{sutton98}, defined as follows:

\begin{definition}[Markov Decision Process (MDP)]\label{def:MDP}
A Markov Decision Process is defined by a tuple $(\S,\A,P,R,\Rmax,T,\gamma)$, where $\S$ is a finite state space, $\A$ is a finite action space, $P: \S\times\A\times\S \rightarrow [0,1]$ is a transition probability matrix with entries $P(s'|s,a)$ that govern the transition to a state $s'\in\S$ after performing action $a\in\A$ in state $s\in\S$, $R:\S\times\A \rightarrow [-\Rmax,\Rmax]$ is a reward function bounded by $\Rmax\in\R_+$ that assigns a reward $R(s,a)$ to every state-action pair, $T\in\N\cup\{\infty\}$ is a (possibly infinite) time horizon for each episode of interaction, and $\gamma\in[0,1]$ is a discount factor, with the restriction that $\gamma<1$ for $T=\infty$.
\end{definition}

\noindent Our oracular access to the environment takes the form of two oracles that coherently implement the MDP dynamics:

\begin{definition}[Quantum access to an MDP]\label{def:quantum-access-MDP}
Let $\M = (\S,\A,P,R,\Rmax,T,\gamma)$ be an MDP as defined in Def.~\ref{def:MDP}. We say that we have quantum access to the MDP if we can call the following oracles:
\begin{enumerate}
	\item An oracle $\P$ that coherently samples a column of the transition probability matrix $P$:
		\begin{equation}
			\P : \ket{s,a}\ket{0} \mapsto \ket{s,a} \sum_{s' \in \S} \sqrt{P(s'|s,a)}\ket{s'}.
		\end{equation}
	\item An oracle $\Rwd$ that returns a binary representation of the output of the reward function $R$:
		\begin{equation}
			\Rwd : \ket{s,a}\ket{0} \mapsto \ket{s,a}\ket{R(s,a)}.
		\end{equation}
\end{enumerate}
\end{definition}

\noindent We also assume the ability to construct a unitary $\Pi$ that coherently implements a policy $\policy$:

\begin{definition}[Quantum evaluation of a policy]\label{def:quantum-eval-policy}
Let $\policy: \S\times\A \rightarrow [0,1]$ be a reinforcement learning policy acting in a state-action space $\S\times\A$ and parametrized by a vector $\params \in \R^{d}$ (that can be encoded with finite precision as $\ket{\params}$). We say that the policy is quantum-evaluatable if we can construct a unitary satisfying:
\begin{equation}
	\Pi : \ket{\params}\ket{s}\ket{0} \mapsto \ket{\params}\ket{s} \sum_{a \in \A} \sqrt{\policy(a|s)}\ket{a}.
\end{equation}
\end{definition}

\noindent Such a construction would be very natural for some quantum policies (such as the \textsc{raw-PQC} defined in the next subsection). But any policy that can be computed classically could also be turned into such a unitary via quantum simulation of the classical computation of $(\policy(a|s) : a\in\A)$ and known subroutines to encode this probability vector into the amplitudes of a quantum state \cite{grover02}.

Equipped with the proper quantum access to the environment and the policy, we can construct simple subroutines that create superpositions of trajectories in the environment and evaluate the returns of these trajectories.

\begin{lemma}[Superposition of trajectories]\label{lem:U-traj}
Let $\M$ be a quantum-accessible MDP with oracles $\P,\Rwd$ as defined in Def.~\ref{def:quantum-access-MDP}, and let $\policy$ be a quantum-evaluatable policy with its unitary implementation $\Pi$ as defined in Def.~\ref{def:quantum-eval-policy}. A unitary that prepares a coherent superposition of all trajectories $\tau = (s_0, a_0, \ldots, s_{T-1}, a_{T-1})$ of length $T$ (without their rewards), i.e.,
\begin{equation}
	U_{P(\tau)} : \ket{\params}\ket{s_0}\ket{0} \mapsto \ket{\params}\sum_{\tau} \sqrt{P_{\params}(\tau)} \ket{s_0, a_0, \ldots, s_{T-1}, a_{T-1}}
\end{equation}
for $P_{\params}(\tau) = \prod_{t=0}^{T-1}\policy(a_t|s_t)P(s_{t+1}|s_t,a_t)$, can be implemented using $\O(T)$ calls to $\P$ and $\Pi$.
\end{lemma}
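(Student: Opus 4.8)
The plan is to build the trajectory superposition incrementally, one time step at a time, alternating applications of the policy unitary $\Pi$ and the transition oracle $\P$. The key observation is that both oracles act by reading a ``current'' register and writing the next symbol of the trajectory into a fresh ancilla while leaving their inputs untouched, so by linearity their amplitudes simply multiply branch-by-branch, exactly matching the factorized form of $P_{\params}(\tau)$.

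Concretely, I would lay out the workspace as the parameter register $\ket{\params}$ together with a trajectory register that grows by appending fresh $\ket{0}$ ancillas. Starting from $\ket{\params}\ket{s_0}$ plus these ancillas, at each step $t = 0,1,\dots,T-1$ I would first apply $\Pi$ with inputs $\ket{\params}$, the sub-register holding the most recently produced state $\ket{s_t}$, and a fresh action ancilla, turning $\ket{0}$ into $\sum_{a_t}\sqrt{\policy(a_t|s_t)}\ket{a_t}$; then apply $\P$ with inputs the sub-registers $\ket{s_t,a_t}$ and a fresh state ancilla, turning $\ket{0}$ into $\sum_{s_{t+1}}\sqrt{P(s_{t+1}|s_t,a_t)}\ket{s_{t+1}}$. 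Since each oracle acts linearly and preserves its input registers, applying it to a superposition simply attaches the new amplitude factor to every existing branch.

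The core of the argument is then a straightforward induction on $t$: I would show that after completing step $t$ the state equals $\ket{\params}\sum \sqrt{\prod_{t'=0}^{t}\policy(a_{t'}|s_{t'})P(s_{t'+1}|s_{t'},a_{t'})}\,\ket{s_0,a_0,\dots,s_{t+1}}$, with the sum ranging over all partial trajectories. Each $\Pi$ call multiplies the amplitude of every branch by $\sqrt{\policy(a_t|s_t)}$ and each $\P$ call by $\sqrt{P(s_{t+1}|s_t,a_t)}$, which is precisely the inductive step; after $T$ rounds the amplitudes become $\sqrt{P_{\params}(\tau)}$. As each round uses exactly one call to $\Pi$ and one to $\P$, the total query count is $2T = \O(T)$, as claimed.

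The only genuine point requiring care is the register book-keeping and the status of the terminal state: the final $\P$ call produces the state $s_T$ appearing in the last factor $P(s_T|s_{T-1},a_{T-1})$ of $P_{\params}(\tau)$, so this target register must be retained (equivalently, $\tau$ must be understood to carry its terminal state) for the amplitudes to be exactly $\sqrt{P_{\params}(\tau)}$ rather than a state marginalized over $s_T$. I expect no deeper obstacle than verifying that $\Pi$ and $\P$ can be addressed to act on the appropriate sub-registers of the trajectory register while leaving the rest untouched, which follows immediately from their definitions together with linearity.
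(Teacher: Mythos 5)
Your proposal is correct and follows essentially the same construction as the paper's proof: alternately applying $\Pi$ and $\P$ to successive registers for $t = 0, \ldots, T-1$, for a total of $T$ calls to each oracle. Your additional remark about retaining the terminal-state register produced by the final $\P$ call is a fair observation, but it does not change the argument.
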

\begin{proof}
We apply sequentially $\Pi$ and $\P$ on the registers indexed $\{0,2i+1,2i+2\}$ and $\{2i+1,2i+2,2i+3\}$ respectively, for $i = 0, \ldots, T-1$. This amounts to $T$ calls to each oracle.
\end{proof}

\begin{lemma}[Return]\label{lem:U-ret}
Let $\M$ be a quantum-accessible MDP with oracles $\P,\Rwd$ as defined in Def.~\ref{def:quantum-access-MDP}, and let $\tau = (s_0, a_0, \ldots, s_{T-1}, a_{T-1})$ be a trajectory of length $T$ in this MDP (without its rewards). A unitary that computes the return $R(\tau) = \sum_{t=0}^{T-1} \gamma^t r_t$ associated to this trajectory, i.e.,
\begin{equation}
	U_{R(\tau)} : \ket{\tau}\ket{0} \mapsto \ket{\tau}\ket{R(\tau)}
\end{equation}
can be implemented using $\O(T)$ calls to $\Rwd$.
\end{lemma}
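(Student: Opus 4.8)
The plan is to mirror the construction of Lemma~\ref{lem:U-traj}: since the trajectory register already stores each state--action pair $(s_t,a_t)$ in a dedicated subregister, I would extract the individual rewards $r_t = R(s_t,a_t)$ one at a time using $\Rwd$, accumulate the discounted sum into an output register, and then uncompute the temporary reward data so that no garbage is left behind.

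First I would extract the rewards. For each $t\in\{0,\dots,T-1\}$, applying $\Rwd$ to the pair $(s_t,a_t)$ together with a scratch register initialized to $\ket{0}$ produces $\ket{s_t,a_t}\ket{r_t}$. This costs exactly one query to $\Rwd$ per time step. Second I would accumulate: since the discount factor $\gamma$ is a classically known constant, the weights $\gamma^t$ are fixed scalars that can be precomputed classically and hard-wired into reversible arithmetic circuits. For each $t$ I would multiply $\ket{r_t}$ by $\gamma^t$ and add the result into an accumulator register holding the running partial sum $\sum_{t'\le t}\gamma^{t'}r_{t'}$. These multiplications and additions involve no call to any environment oracle, so they do not contribute to the query complexity; after all $T$ steps the accumulator contains $R(\tau)$. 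Third, I would uncompute each scratch register by running $\Rwd^{\dagger}$ on $(s_t,a_t,r_t)$, which restores the ancilla to $\ket{0}$ without disturbing the trajectory register or the accumulator, leaving exactly $\ket{\tau}\ket{R(\tau)}$. This adds another $T$ queries, for a total of $2T = \O(T)$ calls to $\Rwd$. In fact a single scratch register suffices if one interleaves the steps, computing $r_t$, adding $\gamma^t r_t$ to the accumulator, and immediately uncomputing $r_t$ before moving on to $t+1$.

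The main obstacle is bookkeeping rather than anything conceptually hard: one must ensure that the arithmetic forming the discounted sum is reversible and does not entangle the accumulator with the scratch register, so that the final $\Rwd^{\dagger}$ calls cleanly return the reward ancillas to $\ket{0}$; the standard compute--add--uncompute pattern handles this. A secondary point worth noting is finite precision: the binary oracle $\Rwd$ outputs $R(s,a)$ to a fixed number of bits and the scalar multiplications by $\gamma^t$ are carried out at finite precision, but since the claim concerns only the number of oracle queries, these rounding considerations do not affect the stated $\O(T)$ bound.
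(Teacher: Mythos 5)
Your proposal is correct and follows essentially the same route as the paper's proof: use $T$ queries to $\Rwd$ to write the rewards into ancilla registers, form the discounted sum $R(\tau)$ with a reversible classical circuit (no oracle cost), and spend another $T$ queries to uncompute the reward ancillas, for $\O(T)$ calls in total. Your added remarks on interleaving with a single scratch register and on finite precision are reasonable refinements but do not change the argument.
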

\begin{proof}
Using $T$ calls to $\Rwd$, we simply collect all the rewards of the trajectory in an additional register. Then we simulate a classical circuit that computes the discounted sum of these rewards $R(\tau)$ (then uncompute the rewards using $T$ calls to $\Rwd$ on the same register).
\end{proof}

\begin{figure}[t]
	\centering
	\includegraphics[width=1\linewidth]{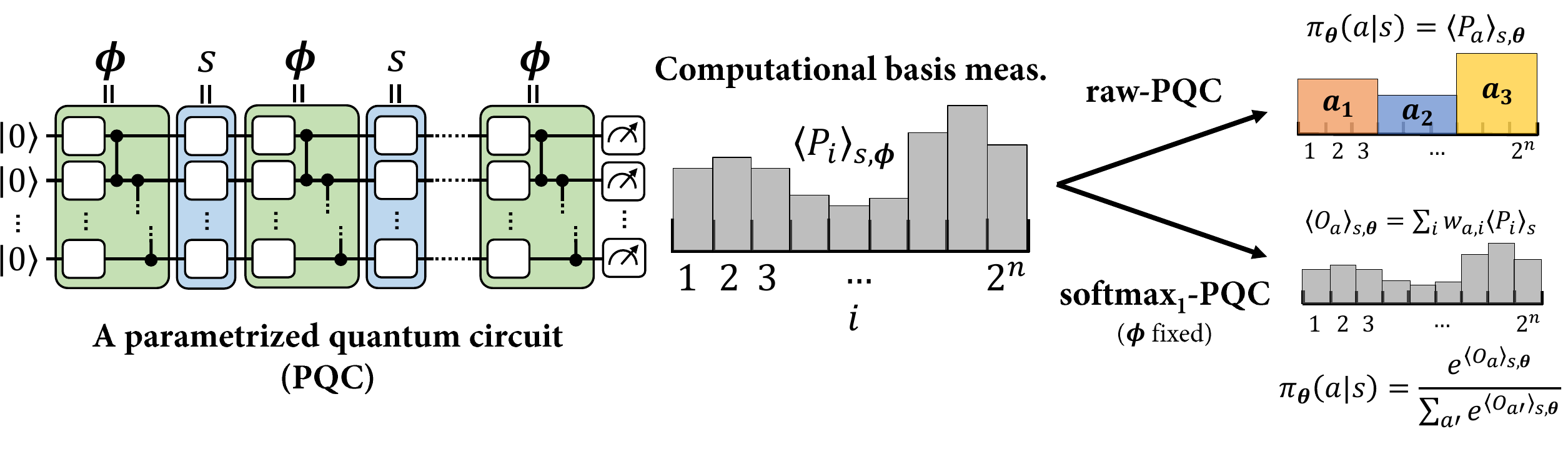}
	\caption{\textbf{The parametrized quantum policies considered in this work.} A parametrized quantum circuit (PQC) taking as input the agent's state $s$ and parameters $\phis$ produces a quantum state which has probability $\expval{P_i}_{s,\phis}$ of being projected onto the (computational) basis state $\ket{i}$. The \textsc{raw-PQC} policy simply assigns a subset of these basis states to each action $a\in\A$, and its parameters are $\params = \phis$. The $\textsc{softmax}_1$\textsc{-PQC} policy uses instead a fixed assignment of $\phis$, and computes the weighted expectation values $\expval{O_a}_{s,\params} = \sum_i w_{a,i} \expval{P_i}_{s}$.\protect\footnotemark\ The softmax of these expectation values gives the policy $\policy$, whose parameters are $\params = \bm{w}$.}
	\label{fig:PQC-policies}
\end{figure}

\subsection{Quantum policies\label{sec:quantum-policies}}

The efficiency of our quantum policy gradient algorithms depends on regularity conditions on the policies $\policy$ to be trained. Particularly well-behaved policies are policies defined out of parametrized quantum circuits (PQC) \cite{benedetti19} that have been previously studied in classical reinforcement learning environments \cite{jerbi21}. For each of our numerical and analytical gradient estimation algorithms, we will be interested more specifically in a certain type of PQC-policies, depicted in Fig.~\ref{fig:PQC-policies}, and defined below.

\begin{definition}[\textsc{raw-PQC}]\label{def:raw-PQC}
Given a PQC acting on $n$ qubits, taking as input a state $s \in \S$ and parameters $\phis \in \mathbb{R}^d$, such that its corresponding unitary $U(s,\phis)$ produces the quantum state $\ket{\psi_{s,\phis}} = U(s,\phis) \ket{0^{\otimes n}}$, we define its associated \textsc{raw-PQC} policy as:
	\begin{equation}
	\policy (a|s) = \expval{P_a}_{s,\params}
	\end{equation}
where $\expval{P_a}_{s,\params} = \ev{P_{a}}{\psi_{s,\phis}}$ is the expectation value of a projection $P_{a}$ associated to action $a$, such that $\sum_{a} P_a = I$ and $P_aP_{a'} = \delta_{a,a'}P_a$. $\params = \phis$ constitutes all of its trainable parameters.
\end{definition}

\begin{definition}[\textsc{softmax-PQC}]\label{def:softmax-PQC}
Given a PQC acting on $n$ qubits, taking as input a state $s \in \S$ and parameters $\phis \in \mathbb{R}^{d'}$, such that its corresponding unitary $U(s,\phis)$ produces the quantum state $\ket{\psi_{s,\phis}} = U(s,\phis) \ket{0^{\otimes n}}$, we define its associated \textsc{softmax-PQC} policy as:
	\begin{equation}\label{eq:softmax-PQC}
	\policy (a|s) = \frac{e^{\expval{O_a}_{s,\params}}}{\sum_{a'} e^{\expval{O_{a'}}_{s,\params}}}
	\end{equation}
where $\expval{O_a}_{s,\params} = \ev{\sum_i w_{a,i} H_{a,i}}{\psi_{s,\phis}}$ is the expectation value of the weighted Hermitian operators $H_{a,i}$ associated to action $a$ with weights $w_{a,i}\in\R$. $\params = (\phis,\weights)$ constitutes all of its trainable parameters.
\end{definition}

\footnotetext{Note that the choice of basis for the measurement, i.e., the $P_i$'s, could also depend on $a$.}

\noindent More specifically, we are interested in a restricted family of \textsc{softmax-PQC} policies:

\begin{definition}[$\textsc{softmax}_1$\textsc{-PQC}]\label{def:softmax1-PQC}
We define a $\textsc{softmax}_1$\textsc{-PQC} policy as a \textsc{softmax-PQC} where $\phis = \varnothing$ and, for all $a\in\A$, $H_{a,i} = P_{a,i}$ is a projection on a subspace indexed by $i$, such that $\sum_{i} P_{a,i} = I$ and $P_{a,i}P_{a,i'} = \delta_{i,i'}P_{a,i}$.\footnote{This constraint includes the degenerate case where $P_{a,i} = P_{a',i} = P_i$, for all $a,a'$, illustrated in Fig.~\ref{fig:PQC-policies}.}
\end{definition}

\noindent We call the resulting policy a $\textsc{softmax}_1$\textsc{-PQC}, as its log-policy gradient is always bounded in $\ell_1$-norm, i.e., $\norm{\grad\log\policy(a|s)}_1 \leq 2, \forall s,a,\params$ (see Lemma \ref{lem:softmax1-PQC-bound}).

\subsection{Core subroutines\label{sec:core-subroutines}}

The core methods behind numerical and analytical policy gradient algorithms both have their quantum analogs, that offer up to quadratic speed-ups in certain regimes. In this section, we present these quantum subroutines and explain the conditions that govern the speed-up regimes.

\subsubsection{Quantum gradient estimation\label{sec:num-grad-sub}}

Quantum algorithms for gradient estimation have been studied since early works in quantum computing. Notably, Jordan's algorithm \cite{jordan05} manages to estimate gradients $\nabla_{\params} f(\params)$ with a query complexity that is independent of their dimension $d = \abs{\params}$. However, this algorithm assumes a very powerful binary oracle access to the input function $f$ (see Def.~\ref{def:oracles}). And for functions that cannot be evaluated to arbitrary precision $\varepsilon$ with a negligible cost in $\varepsilon^{-1}$ (e.g., $\O(1)$ or $\O(\log(\varepsilon^{-1}))$), which is the case of value functions, the construction of this oracle introduces non-negligible costs \cite{gilyen19}. More precisely, these costs depend on the dimension $d$, but also on the smoothness of the derivatives of $f$, as smoother functions are more amenable to efficient evaluation of their gradient. Notably, a measure of smoothness that has been studied for quantum gradient estimation is the Gevrey condition \cite{gilyen19,cornelissen19}:

\begin{definition}[Gevrey functions]\label{def:gevrey_functions}
Let $d \in \N$, $\sigma \in [0,1]$, $M > 0$, $c > 0$, $\Omega \subseteq \R^d$ an open subset and $f : \R^d \to \R$. We say that $f$ is a Gevrey function on $\Omega$ with parameters $M$, $c$ and $\sigma$, and denote $f\in\G_{d,M,c,\sigma,\Omega}$ when all (higher order) partial derivatives of $f$ exist, and the following upper bound on its partial derivatives is satisfied for all $\bm{x} \in \Omega$, $k \in \N_0$ and $\alphas \in [d]^k$:
\begin{equation}
	\abs{\partial_{\alphas}f(\bm{x})} \leq \frac{M}{2}c^k(k!)^{\sigma}.
\end{equation}
\end{definition}

\noindent The query complexity of the quantum gradient estimation algorithm is summarized in the following theorem:

\begin{theorem}[Numerical gradient estimation (Theorem 3.8 in \cite{cornelissen19})]\label{thm:num-gradient}
Given phase oracle access $O_f$ to a function $f \in \G_{d,M,c,\sigma,\Omega}$, an $\varepsilon \in (0,c)$, and an $\bm{x}\in\Omega$ (such that a hypercube of edge length $\O(\log(cd^{\sigma}/\varepsilon)/\varepsilon)$ centered around $\bm{x}$ is still in $\Omega$), there exists an algorithm that returns an $\varepsilon$-precise estimate of $\nabla f(\bm{x})$ in $\ell_\infty$-norm with success probability at least $2/3$ using
\begin{equation}
	\widetilde{\O}\left(\frac{Mcd^{\max\{\sigma,1/2\}}}{\varepsilon}\right)
\end{equation}
queries to $O_f$.
\end{theorem}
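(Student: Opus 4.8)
The plan is to follow the quantum Fourier transform approach to gradient estimation pioneered by Jordan \cite{jordan05} and made rigorous for non-linear functions by Gilyén et al.\ \cite{gilyen19}, tracking how the Gevrey parameters $M$, $c$, $\sigma$ enter the query count. The guiding observation is that if $f$ were exactly \emph{linear} near $\bm{x}$, then preparing a uniform superposition $\frac{1}{\abs{G}^{d/2}}\sum_{\bm{k}\in G^d}\ket{\bm{k}}$ over a product grid $G^d$ with $N$ points per axis, kicking back the phase $e^{2\pi i N\,\nabla f(\bm{x})\cdot\bm{k}}$, and applying an inverse quantum Fourier transform to each of the $d$ coordinate registers would deposit $N\,\partial_i f(\bm{x})$ in the $i$-th register. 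Measuring all registers and rescaling by $1/N$ then recovers $\nabla f(\bm{x})$ to additive $\ell_\infty$-resolution $\O(1/N)$. The whole argument therefore reduces to (i) synthesizing a phase that is as close to linear in $\bm{k}$ as the smoothness of $f$ allows, and (ii) bounding the residual error in terms of the Gevrey data.

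First I would replace the unavailable exact linear phase by a high-order central finite-difference combination. Fixing an even stencil order $2m$ and coefficients $\{a_\ell\}_{\ell=-m}^{m}$ with $\sum_\ell a_\ell\, p(\ell) = p'(0)$ for every polynomial $p$ of degree at most $2m$, the kicked-back phase $e^{2\pi i N \sum_\ell a_\ell f(\bm{x}+\ell\delta\bm{k})}$ reproduces the ideal linear phase up to the order-$(2m{+}1)$ Taylor remainder. Here the Gevrey condition of Def.~\ref{def:gevrey_functions} does the decisive work: every partial derivative of order $2m{+}1$ is bounded by $\tfrac{M}{2}c^{2m+1}\big((2m{+}1)!\big)^{\sigma}$, so the truncation error is controlled by $\O\big(M c^{2m+1}\big((2m{+}1)!\big)^{\sigma}\|\bm{\delta}\|^{2m+1}\big)$. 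Choosing the step $\delta$ small and the order $m$ only poly-logarithmic in the parameters drives this below $\varepsilon$, while the factorial growth $\big((2m{+}1)!\big)^{\sigma}$ is precisely what pushes $\sigma$ into the final exponent, since for $\sigma>1/2$ the stencil must be lengthened to keep the remainder controlled.

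Next I would account for the cost of realizing this phase from the \emph{normalized} phase oracle $O_f$ of Def.~\ref{def:oracles}. Since the needed phase is proportional to the grid resolution $N$, synthesizing it (controlled on $\bm{k}$, over all $\O(m)$ stencil points) costs $\widetilde{\O}(N)$ applications of $O_f$, with the probability-to-phase conversion of Lemma~\ref{lem:prob-to-phase} contributing only a $\O(\log(1/\varepsilon))$ overhead when the underlying access is a probability oracle. The order-one Gevrey bound gives $\|\nabla f\|=\O(Mc)$, so resolving the gradient to $\ell_\infty$-precision $\varepsilon$ requires $N=\widetilde{\Theta}(Mc/\varepsilon)$, yielding a base cost $\widetilde{\O}(Mc/\varepsilon)$. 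The remaining dimension factor enters through the \emph{multivariate} remainder of step (i): the displacement $\bm{\delta}$ spanning the grid carries a $\sqrt{d}$-type norm growth, and the order-$(2m{+}1)$ remainder summed over all multi-indices inflates by a $d$-dependent factor; keeping it below $\varepsilon$ while optimizing $m$ against $\sigma$ forces a grid refinement of $d^{1/2}$ in the analytic-like regime $\sigma\le 1/2$ and of $d^{\sigma}$ when $\sigma>1/2$, i.e.\ a factor $d^{\max\{\sigma,1/2\}}$, and hence the claimed $\widetilde{\O}\big(Mcd^{\max\{\sigma,1/2\}}/\varepsilon\big)$.

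The main obstacle, and the technically delicate part, is the joint control of steps (i) and (ii): one must ensure the synthesized phase never wraps past a single period (no aliasing), so the inverse QFT reads the intended bin, while \emph{simultaneously} keeping the multivariate finite-difference remainder below $\varepsilon$ uniformly over the whole grid using only the Gevrey derivative bounds. Balancing the stencil order $m$, which costs $\O(m)$ queries but suppresses the remainder super-polynomially, against the Gevrey exponent $\sigma$ so as to land exactly on $d^{\max\{\sigma,1/2\}}$ is the crux. The stated success probability $2/3$ then follows from the concentration of the inverse-QFT output around the correct bin, boosted by the usual median-of-repetitions argument, which I would defer to the end; the full quantitative analysis is carried out in Theorem~3.8 of \cite{cornelissen19}, building on \cite{gilyen19}.
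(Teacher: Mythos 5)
The paper does not prove this statement at all: it is imported verbatim as Theorem~3.8 of \cite{cornelissen19} and used as a black box, so there is no internal proof to compare your attempt against. Judged on its own terms, your sketch correctly reproduces the architecture of the proof in the cited work (and in \cite{gilyen19} before it): Jordan's QFT-based phase-estimation of a linearized $f$, replacement of the unavailable linear phase by a degree-$2m$ central-difference combination $\sum_\ell a_\ell f(\bm{x}+\ell\delta\bm{k})$, control of the truncation error via the Gevrey bound $\frac{M}{2}c^{k}(k!)^{\sigma}$, the $\widetilde{\O}(N)$ cost of amplifying the normalized phase oracle, and the $\sqrt{d}$-type growth of $\norm{\bm{k}}$ over the grid as the source of the dimension factor. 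The identification of $N=\widetilde{\Theta}(Mc/\varepsilon)$ from the first-order Gevrey bound is also correct.

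However, as a proof rather than a roadmap there is a genuine gap, and you name it yourself: the ``joint control of steps (i) and (ii)'' --- choosing $m$, $\delta$ and $N$ so that the phase never aliases while the multivariate remainder stays below $\varepsilon$ uniformly over the grid, and showing that the optimal trade-off between the stencil cost, the factor $\bigl((2m{+}1)!\bigr)^{\sigma}$, and the $d$-dependent moment bounds on $\norm{\bm{k}}$ lands exactly on $d^{\max\{\sigma,1/2\}}$ --- is the entire quantitative content of the theorem, and you assert it and defer it to Theorem~3.8 of \cite{cornelissen19} rather than carry it out. In particular, the claim that the remainder ``inflates by a $d$-dependent factor'' forcing a refinement of $d^{1/2}$ versus $d^{\sigma}$ is stated without the moment computation (the expectation of $\abs{\langle\bm{k},\nabla\rangle^{k}f}$ over the grid) that actually produces these exponents, and the success-probability analysis is likewise waved at. Since the paper itself cites this result rather than proving it, deferring to the reference is a defensible editorial choice, but it means your submission should be read as a correct summary of where the proof lives, not as a proof.
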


\noindent Notably, in this case the dependence on the dimension of the gradient can only be reduced to $\sqrt{d}$ when the Gevrey condition of $f$ satisfies $\sigma \leq 1/2$.

\subsubsection{Quantum multivariate Monte Carlo\label{sec:quantum-MC-sub}}

Quantum algorithms for estimating the mean $\E[X]$ of a \emph{univariate} random variable $X$ taking values in $\R$ \cite{montanaro15}\break have been studied since early works by Grover \cite{grover98}, and culminated to a near-optimal algorithm that outperforms any classical estimator \cite{hamoudi21}. However, the case of \emph{multivariate} random variables $X$ taking values in $\R^d$ has been studied only more recently \cite{cornelissen21,cornelissen22,huggins21}, and exhibits a dependence on the dimension $d$ that can be up to exponentially worse than for classical estimators (which is $\O(\log(d))$, see Lemma \ref{lem:classical-MVMC}). Before presenting explicitly this dependence on $d$, we first define the input model we consider for this problem:

\begin{definition}[Quantum samples]\label{def:quantum-exp}
Consider a finite random variable $X:\Omega \rightarrow E$ on a probability space $(\Omega, 2^{\Omega}, P)$. Let $\H_{\Omega}$ and $\H_E$ be two Hilbert spaces with basis states $\{\ket{\omega}\}_{\omega \in \Omega}$ and $\{\ket{x}\}_{x \in E}$ respectively.\break We say that we have quantum-sample access to $X$ when we can call the two following oracles:
\begin{enumerate}
	\item A unitary $U_{P}$ acting on $\H_{\Omega}$ as:
	\begin{equation}
		U_{P} : \ket{0} \mapsto \sum_{\omega \in \Omega} \sqrt{P(\omega)} \ket{\omega}
	\end{equation}
	and its inverse~$U_{P}^{-1}$.
	\item A binary oracle $\B_X$ acting on $\H_{\Omega} \otimes \H_E$ such that:
	\begin{equation}
		\B_X : \ket{\omega}\ket{0} \mapsto \ket{\omega}\ket{X(\omega)}.
	\end{equation}
\end{enumerate}
\end{definition}

\newpage

\begin{theorem}[Multivariate Monte Carlo estimation (Theorem 3.3 in \cite{cornelissen22})]\label{thm:MVMC}
Let $X$ be a $d$-dimensional bounded random variable such that $\norm{X}_p \leq B$ for some $p\geq1$. Given quantum-sample access to $X$, for any $\varepsilon,\delta >0$, there exists a quantum multivariate mean estimator that returns an $\varepsilon$-precise estimate of $\E[X]$ in $\ell_\infty$-norm with success probability at least $1-\delta$ using
\begin{equation}
	\widetilde{\O}\left(\frac{B d^{\xi(p)}}{\varepsilon}\right)
\end{equation}
queries to $X$, where $\xi(p) = \max\{0,\frac{1}{2}-\frac{1}{p}\}$.
\end{theorem}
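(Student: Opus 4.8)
The plan is to separate the $p$-dependence from the quantum estimation by a norm inequality, and then to isolate the dimension-free $\ell_2$-bounded case as the technical core. First I would dispose of the exponent $p$. For every fixed vector the finite-dimensional comparison $\norm{x}_2 \le d^{\xi(p)}\norm{x}_p$ holds with $\xi(p)=\max\{0,\tfrac12-\tfrac1p\}$: for $p\le 2$ it reduces to $\norm{x}_2\le\norm{x}_p$ (so $\xi=0$), and for $p\ge 2$ it is the standard interpolation bound $\norm{x}_2\le d^{1/2-1/p}\norm{x}_p$. Applying this pointwise to the random vector turns the hypothesis $\norm{X}_p\le B$ into $\norm{X}_2\le B':=d^{\xi(p)}B$, and the claimed budget $\widetilde{\O}(Bd^{\xi(p)}/\varepsilon)$ becomes exactly $\widetilde{\O}(B'/\varepsilon)$. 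It therefore suffices to prove the theorem for an $\ell_2$-bounded variable with a query count that carries no further factor of $d$.

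For the $\ell_2$ core I would use that the mean is the gradient of a functional built from the characteristic function. Set $\mu=\E[X]$ and $g(v)=\arg\E\!\left[e^{i\langle v,X\rangle}\right]$; expanding the characteristic function gives $g(v)=\langle v,\mu\rangle + O(\norm{v}^2)$, so $\nabla g(0)=\mu$, and the $\ell_\infty$ guarantee we want on $\mu$ is precisely the output norm of a gradient-estimation routine. The quantum samples supply oracle access to $g$: applying $U_P$, then a phase $e^{i\langle v,X(\omega)\rangle}$ realized from $\B_X$ (compute $X(\omega)$ into a register, apply the phase, uncompute), and then $U_P^{-1}$, places on the $\ket{0}$ branch the amplitude $\sum_\omega P(\omega)\,e^{i\langle v,X(\omega)\rangle}=\E\!\left[e^{i\langle v,X\rangle}\right]$. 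This is a probability oracle whose argument is $g(v)$, so Lemma~\ref{lem:prob-to-phase} converts it into a genuine phase oracle at polylogarithmic cost, which I would feed into quantum gradient estimation in the spirit of Theorem~\ref{thm:num-gradient} to read off $\mu$ in $\ell_\infty$-norm. The bound $\norm{X}_2\le B'$ fixes the phase scale (it controls $|\langle v,X\rangle|$ over the relevant grid of directions $v$) and thus enters linearly, while $\varepsilon$ sets the gradient resolution, giving the $\widetilde{\O}(B'/\varepsilon)$ count.

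To upgrade the constant success probability to $1-\delta$ I would run the core estimator $O(\log(d/\delta))$ times and take the coordinate-wise median; a Chernoff plus union-bound argument then makes every one of the $d$ coordinates $\varepsilon$-accurate simultaneously with probability $1-\delta$, and this logarithmic overhead is absorbed into $\widetilde{\O}(\cdot)$.

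I expect the genuine difficulty to sit entirely in the $\ell_2$ core, and specifically in making it dimension-free. A black-box application of the generic Gevrey estimator of Theorem~\ref{thm:num-gradient} to a linear functional cannot suffice, since its complexity carries an unavoidable $d^{\max\{\sigma,1/2\}}\ge d^{1/2}$ floor, whereas the case $p=2$ demands no dependence on $d$ at all. The crux is therefore the refined analysis that exploits the near-linearity of $g$ together with the $\ell_2$ bound: the only deviation from exact linearity is the higher-cumulant gap between $\E[e^{i\langle v,X\rangle}]$ and $e^{i\langle v,\mu\rangle}$, and one must show that $\norm{X}_2\le B'$ controls this gap tightly enough that the phase oracle is accurate over a grid fine enough for gradient estimation to return $\mu$ to $\ell_\infty$-error $\varepsilon$ within $\widetilde{\O}(B'/\varepsilon)$ queries, thereby removing the $\sqrt d$ overhead a naive reduction would incur. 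Quantifying that gap is the technical heart of the argument.
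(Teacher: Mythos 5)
The paper does not prove Theorem~\ref{thm:MVMC} itself: it imports the $\ell_2$-bounded case wholesale from Theorem~3.3 of \cite{cornelissen22}, and its only original contribution is the remark following the theorem statement, namely the reduction via H\"older's inequality $\norm{X}_2 \leq d^{\xi(p)}\norm{X}_p$, the renormalization of $X$ by $d^{\xi(p)}B$, and the trivial bound $\E[\norm{X}_2]\leq 1$ after renormalization. Your first paragraph reproduces this reduction exactly, so up to that point you match the paper. Everything after that is an attempt to re-derive the cited theorem, and there the argument has a genuine gap. First, a technical misstep: after applying $U_P$, the controlled phase $e^{i\langle v, X(\omega)\rangle}$, and $U_P^{-1}$, the amplitude on the $\ket{0}$ branch is the \emph{complex} number $\E[e^{i\langle v,X\rangle}]$, with both a modulus and an argument. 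This is not a probability oracle in the sense of Def.~\ref{def:oracles}, and Lemma~\ref{lem:prob-to-phase} converts a value encoded as a \emph{squared amplitude} into a phase; it cannot extract the argument $g(v)$ of a complex amplitude. The route actually taken in \cite{cornelissen22} is to show that when $\norm{v}$ is small relative to the spread of $X$, this entire circuit is close in \emph{operator norm} to the ideal phase oracle $\ket{v}\mapsto e^{i\langle v,\mu\rangle}\ket{v}$, which requires quantifying $\abs{\E[e^{i\langle v,X\rangle}] - e^{i\langle v,\mu\rangle}}$ uniformly over the grid of query points.

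Second, and more importantly, you correctly observe that a black-box application of Theorem~\ref{thm:num-gradient} cannot work (its $d^{1/2}$ floor would destroy the $p\leq 2$ case), and that one must exploit the near-linearity of $v\mapsto\langle v,\mu\rangle$ to get a dimension-free gradient extraction. But that refined analysis --- controlling the higher-cumulant deviation by the $\ell_2$ bound, choosing the grid scale, and proving that Jordan-type gradient estimation on a nearly-linear phase returns $\mu$ to $\ell_\infty$-error $\varepsilon$ in $\widetilde{\O}(B'/\varepsilon)$ queries --- is the entire content of the cited theorem, and you explicitly leave it unquantified. As a self-contained proof the proposal is therefore incomplete at precisely the step that carries all the difficulty; as a reconstruction of what the paper itself establishes, it is correct and complete, since the paper proves only the H\"older reduction and defers the core to \cite{cornelissen22}.
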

In contrast to the exposition of Theorem 3.3 in \cite{cornelissen22}, we have used Hölder's inequality $\norm{X}_2 \leq d^{\xi(p)}\norm{X}_p$ to make use of a bound on $X$ in any $\ell_p$-norm, renormalized $X$ by $d^{\xi(p)}B$ (a factor which reappears linearly in the number of oracle calls needed, as it impacts linearly the precision needed), and trivially upper bounded $\E[\norm{X}_2]$ by $L_2 = 1$.

\section{Numerical gradient estimation\label{sec:num-grad}}

We obtain our numerical policy gradient algorithm from the quantum gradient estimation subroutine introduced in Sec.~\ref{sec:num-grad-sub}. For this, we need to construct a phase oracle to the value function $\valuefct(s_0)$, which can easily be obtained from the unitaries $U_{P(\tau)}$ and $U_{R(\tau)}$ constructed in Lemma \ref{lem:U-traj} and \ref{lem:U-ret} (see below). But we also need to show that the value function satisfies a Gevrey condition $\sigma \leq 1/2$ in order to get a full quadratic speed-up in sample complexity. For this, we identify the quantity:
\begin{equation}\label{eq:D}
	D = \max_{k\in\N^*} (D_k)^{1/k}
\end{equation}
where $\N^* = \N\backslash\{0\}\cup\{\infty\}$ and
\begin{equation}
	D_k = \max_{s\in\S, \alphas\in[d]^k} \sum_{a\in\A}\abs{\partial_{\alphas} \policy(a|s)}.
\end{equation}
which we show governs the Gevrey condition of the value function. More precisely, we find in Lemma \ref{lem:gevrey-valuefct} that it satisfies $\sigma = 0, M=\frac{4\Rmax}{1-\gamma}$ and $c=DT^2$ in Def.~\ref{def:gevrey_functions}. This allows us to show the following Theorem:\break

\begin{thm}{Numerical policy gradient algorithm}{}
Let $\policy$ be a policy parametrized by a vector $\params\in\R^d$, that can be used to interact with a quantum-accessible MDP $\M=(\S,\A,P,R,\Rmax,T,\gamma)$ with $\gamma T \geq 2$, and such that $\policy$ has a bounded smoothness parameter $D$, defined in Eq.~(\ref{eq:D}). The gradient of the value function corresponding to this policy, $\grad\valuefct(s_0)$, can be evaluated to error $\varepsilon$ in $\ell_\infty$-norm, using
\begin{equation}
	\widetilde{\O}\left(\sqrt{d} \frac{DT^2\Rmax}{\varepsilon(1-\gamma)}\right)
\end{equation}
length-$T$ episodes of interaction with the environment using a quantum numerical gradient estimator, while a classical numerical gradient estimator needs
\begin{equation}
	\widetilde{\O}\left( d \left(\frac{DT^2\Rmax}{\varepsilon(1-\gamma)}\right)^2\right)
\end{equation}
length-$T$ episodes of interaction with the environment.
\end{thm}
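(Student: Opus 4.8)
The plan is to reduce the whole statement to the quantum numerical gradient estimator of Theorem~\ref{thm:num-gradient}, applied to the value function regarded as a function of $\params$. First I would build a probability oracle for $\valuefct(s_0)$. Composing $U_{P(\tau)}$ from Lemma~\ref{lem:U-traj} with $U_{R(\tau)}$ from Lemma~\ref{lem:U-ret} prepares $\ket{\params}\sum_{\tau}\sqrt{P_{\params}(\tau)}\ket{\tau}\ket{R(\tau)}$ using $\O(T)$ calls to $\P,\Pi,\Rwd$, i.e.\ one length-$T$ episode. Since the discounted return obeys $\abs{R(\tau)}\le \Rmax/(1-\gamma)=:B$, a single rotation on a fresh ancilla, controlled on the register $\ket{R(\tau)}$, rotates the amplitude $\sqrt{(R(\tau)+B)/2B}$ into it; measuring that ancilla in $\ket{0}$ then succeeds with probability $\sum_{\tau}P_{\params}(\tau)(R(\tau)+B)/2B=(\valuefct(s_0)+B)/2B$. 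In the language of Def.~\ref{def:oracles} this is exactly a probability oracle $\widetilde{O}_g$ for the shifted value function $g(\params)=\valuefct(s_0)+B$ with output bound $2B$, the trajectory and return registers playing the role of the junk states $\ket{\psi_0},\ket{\psi_1}$; uncomputing the rewards keeps the map unitary, so its inverse is available as well.

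Next I would invoke Lemma~\ref{lem:prob-to-phase} to convert $\widetilde{O}_g$ into a phase oracle $O_g$ up to operator-norm error $\varepsilon'$ at a cost of $\O(\log(1/\varepsilon'))$ queries, a factor hidden inside $\widetilde{\O}$. The key observation is that adding the constant $B$ leaves every derivative of order $k\ge1$ unchanged, so $g$ inherits the Gevrey profile of $\valuefct(s_0)$ claimed in Lemma~\ref{lem:gevrey-valuefct}, namely $\sigma=0$, $M=4\Rmax/(1-\gamma)$ and $c=DT^2$, while the $k=0$ clause holds because $\abs{g}\le 2B=M/2$. Feeding $O_g$ into Theorem~\ref{thm:num-gradient} with these parameters and $\max\{\sigma,1/2\}=1/2$ returns an $\varepsilon$-accurate estimate of $\grad g=\grad\valuefct(s_0)$ in $\ell_\infty$-norm using $\widetilde{\O}(Mc\sqrt{d}/\varepsilon)=\widetilde{\O}(\sqrt{d}\,DT^2\Rmax/(\varepsilon(1-\gamma)))$ phase-oracle queries. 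Since each phase-oracle query is $\widetilde{\O}(1)$ probability-oracle queries and each probability-oracle query is one length-$T$ episode, this is precisely the claimed quantum episode count. I would also check the domain hypothesis of Theorem~\ref{thm:num-gradient} (that the Gevrey bound holds on a hypercube of the required edge length around $\params$), which is where the assumption $\gamma T\ge2$ feeds in through Lemma~\ref{lem:gevrey-valuefct}.

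For the classical bound I would run the same finite-difference template but with classical Monte Carlo value estimates. For a $\sigma=0$ Gevrey function a (high-order) central difference lets one keep the step size essentially constant while driving the truncation error below $\varepsilon$, so the dominant cost is statistical; because the difference quotient amplifies the per-evaluation Monte Carlo noise by a factor scaling like $c=DT^2$, estimating one coordinate to error $\varepsilon$ costs $\widetilde{\O}((Bc/\varepsilon)^2)$ episodes. Repeating over all $d$ coordinates and union-bounding the failure probability (a $\log d$ overhead absorbed into $\widetilde{\O}$) yields $\widetilde{\O}(M^2c^2 d/\varepsilon^2)=\widetilde{\O}(d\,(DT^2\Rmax/(\varepsilon(1-\gamma)))^2)$ episodes, matching the stated bound; alternatively this follows from the classical numerical-gradient guarantee in \cite{cornelissen19} instantiated with the same Gevrey parameters.

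The genuinely hard part is entirely contained in the cited Lemma~\ref{lem:gevrey-valuefct}: establishing that $\valuefct(s_0)$ is Gevrey with $\sigma=0$ and $c=DT^2$, which requires bounding arbitrary-order mixed partial derivatives $\partial_{\alphas}\valuefct$ by unrolling the trajectory-probability product $P_{\params}(\tau)=\prod_t\policy(a_t\mid s_t)P(s_{t+1}\mid s_t,a_t)$, distributing derivatives across the $T$ policy factors via a Leibniz/multinomial expansion, and re-summing against the smoothness quantity $D$ of Eq.~(\ref{eq:D}). Granting that lemma, the theorem itself is bookkeeping, with the only delicate points being the shift that handles negative returns, the verification that this constant shift preserves the Gevrey class, and the accounting that collapses one probability-oracle call to exactly one length-$T$ episode.
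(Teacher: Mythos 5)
Your proposal follows essentially the same route as the paper: build a probability oracle for the value function by composing $U_{P(\tau)}$ and $U_{R(\tau)}$ with a controlled rotation, convert it to a phase oracle via Lemma~\ref{lem:prob-to-phase}, feed the Gevrey parameters of Lemma~\ref{lem:gevrey-valuefct} into Theorem~\ref{thm:num-gradient}, and obtain the classical bound from the central-difference-plus-Monte-Carlo analysis of Lemma~\ref{lem:classical-num}. Your explicit shift to the amplitude $\sqrt{(R(\tau)+B)/(2B)}$ is in fact slightly more careful than the paper's own normalization $\widetilde{R}(\tau)=R(\tau)(1-\gamma)/\Rmax$, which can be negative, and your observation that the constant shift leaves all derivatives of order $k\geq 1$ unchanged correctly patches this.
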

\begin{proof}
We apply Theorem \ref{thm:num-gradient} for $f=\valuefct(s_0)$ as a function of $\params$. To construct the phase oracle $O_f$, we first construct a probability oracle $\widetilde{O}_f$ to $f$. For this we apply on the state $\ket{s_0}\ket{0}$ the unitaries $U_{P(\tau)}$ and $U_{R(\tau)}$ from Lemmas \ref{lem:U-traj} and \ref{lem:U-ret} respectively, to get
\begin{equation}
	\ket{\params}\ket{s_0}\ket{0}\ket{0} \mapsto \ket{\params}\sum_{\tau} \sqrt{P_{\params}(\tau)} \ket{\tau} \ket{R(\tau)} \ket{0}.
\end{equation}
Then we rotate the last qubit proportionally to the return $R(\tau)$, such that the probability of this qubit being $\ket{0}$ encodes the value function:
\begin{align}
	&\mapsto \ket{\params}\sum_{\tau} \sqrt{P_{\params}(\tau)} \ket{\tau} \ket{R(\tau)} \left(\sqrt{\widetilde{R}(\tau)}\ket{0} + \sqrt{1-\widetilde{R}(\tau)}\ket{1}\right)\\
	&= \ket{\params}\sqrt{\widetilde{V}_{\policy}(s_0)} \ket{\psi_0}\ket{0} + \sqrt{1-\widetilde{V}_{\policy}(s_0)} \ket{\psi_1}\ket{1}
\end{align}
where $\widetilde{R}(\tau) = \frac{R(\tau)(1-\gamma)}{\Rmax}$ and $\widetilde{V}_{\policy}(s_0) = \frac{\valuefct(s_0)(1-\gamma)}{\Rmax}$. This probability oracle $\widetilde{O}_f$ can be converted into a phase oracle $O_f$ using Lemma \ref{lem:prob-to-phase}, which only comes with a logarithmic overhead in the query complexity.\\
From Lemma \ref{lem:gevrey-valuefct}, we know that the value function satisfies the Gevrey conditions for $\sigma = 0, M=\frac{4\Rmax}{1-\gamma}$ and $c=DT^2$,  in Theorem \ref{thm:num-gradient}, resulting in the stated quantum query complexity.\\

The classical query complexity is proven in Lemma \ref{lem:classical-num}.
\end{proof}

Note that the total query complexity of the quantum and classical numerical gradient estimators, in terms of the number of calls to $\P$ and $\Rwd$, is $\widetilde{\O}\left( \sqrt{d} \frac{DT^3\Rmax}{\varepsilon(1-\gamma)}\right)$ and $\widetilde{\O}\left( d \frac{D^2T^5\Rmax^2}{\varepsilon^2(1-\gamma)^2}\right)$, respectively.

\noindent The \textsc{raw-PQC} policies are then a perfect fit for these algorithms as we can show that:

\begin{lemma}\label{lem:raw-PQC-bound}
Any \textsc{raw-PQC} policy as defined in Def.~\ref{def:raw-PQC} satisfies $D\leq 1$.
\end{lemma}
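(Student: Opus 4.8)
The plan is to collapse the sum over actions into a single operator-norm estimate and then exploit the product structure of the circuit unitary. I will work under the standard \textsc{PQC} convention (as in the constructions of Ref.~\cite{jerbi21}) that each trainable component of $\phis$ enters $U(s,\phis)$ through a single rotation gate $e^{-i\phi_j H_j}$ with a Pauli generator $H_j=P_j/2$, so that $\norm{H_j}=\tfrac12$; this normalization is exactly what will turn the final constant into $1$ rather than $2^k$.

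First I would remove the sum over $a$ by a sign trick. Fix a state $s$, a point $\params$ and a multi-index $\alphas\in[d]^k$, and let $\epsilon_a=\sgn\!\big(\partial_{\alphas}\policy(a|s)\big)\in\{-1,0,1\}$. These are constants at the chosen evaluation point, so by linearity of differentiation, writing $O=\sum_{a\in\A}\epsilon_a P_a$,
\begin{equation}
\sum_{a\in\A}\abs{\partial_{\alphas}\policy(a|s)}=\sum_{a\in\A}\epsilon_a\,\partial_{\alphas}\ev{P_a}{\psi_{s,\phis}}=\partial_{\alphas}\ev{O}{\psi_{s,\phis}}.
\end{equation}
Since the $P_a$ are orthogonal projectors with $\sum_a P_a=I$, the operator $O$ acts as $\epsilon_a$ on the range of $P_a$ and hence has $\norm{O}\le1$. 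It therefore suffices to bound $\abs{\partial_{\alphas}F_O(\params)}$ uniformly over observables $O$ with $\norm{O}\le1$, where $F_O(\params)=\ev{O}{\psi_{s,\phis}}=\bra{0^{\otimes n}}U(s,\phis)^\dagger O\,U(s,\phis)\ket{0^{\otimes n}}$.

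Next I would pass to the Heisenberg picture $A(\params)=U(s,\phis)^\dagger O\,U(s,\phis)$ and use $\abs{\partial_{\alphas}F_O}=\abs{\bra{0^{\otimes n}}\partial_{\alphas}A\ket{0^{\otimes n}}}\le\norm{\partial_{\alphas}A}$. Writing $A$ as the composition of the conjugation superoperators $\mathcal{G}_\ell:X\mapsto G_\ell^\dagger X G_\ell$ applied to $O$, the only factor depending on $\phi_j$ is the one for its gate $G=e^{-i\phi_j H_j}$, and $\partial_{\phi_j}\big(G^\dagger X G\big)=G^\dagger\big(i\comm{H_j}{X}\big)G$. Thus each $\partial_{\phi_j}$ simply post-composes that factor with the commutator superoperator $\mathcal{C}_j:X\mapsto i\comm{H_j}{X}$, and because every parameter occurs in exactly one gate, the full derivative $\partial_{\alphas}$ inserts one copy of $\mathcal{C}_{\alpha_m}$ per index into the composition (a repeated index $j$ contributing $\mathcal{C}_j$ to the appropriate power). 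Each $\mathcal{G}_\ell$ preserves the operator norm while $\norm{\mathcal{C}_j[X]}\le2\norm{H_j}\norm{X}\le\norm{X}$, so submultiplicativity gives $\norm{\partial_{\alphas}A}\le\prod_{m=1}^{k}2\norm{H_{\alpha_m}}\le1$. Hence $D_k\le1$ for every $k\ge1$, and therefore $D=\max_k(D_k)^{1/k}\le1$.

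The step I expect to be the crux is the bookkeeping of the mixed and repeated derivatives: one must verify that $\partial_{\phi_j}$ really acts as a clean, localized insertion of $\mathcal{C}_j$ with no cross terms, and that $\partial_{\phi_j}^m$ yields $\mathcal{C}_j^m$ at the same location. This is precisely where the ``each parameter appears once'' hypothesis is essential, since otherwise a single derivative spawns a sum over the gate occurrences of $\phi_j$ and the clean constant $1$ degrades. The remaining ingredients --- the reduction in the first step and the norm bounds on $\mathcal{G}_\ell$ and $\mathcal{C}_j$ --- are routine, the eigenvalue-$0$ directions of $O$ being harmless for the operator-norm estimate.
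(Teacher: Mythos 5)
Your proof is correct, but it takes a genuinely different route from the paper's. The paper bounds $\partial_{\alphas}\policy(a|s)$ via the (higher-order) parameter-shift rule, writing the $k$-th derivative as $\frac{1}{2^k}\sum_{\omegas}c_{\omegas}\expval{P_a}_{s,\params+\omegas}$ with $\sum_{\omegas}\abs{c_{\omegas}}=2^k$, then swaps the sums and uses $\sum_a\expval{P_a}=1$ at each shifted parameter setting. You instead collapse the action sum first, via the sign trick $\sum_a\abs{\partial_{\alphas}\expval{P_a}}=\partial_{\alphas}\expval{O}$ with $O=\sum_a\epsilon_aP_a$ and $\norm{O}\le1$, and then bound the derivative in the Heisenberg picture by iterated commutator insertions, $\norm{i\comm{H_j}{X}}\le2\norm{H_j}\norm{X}$. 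Both arguments lean on the same structural hypothesis, which the paper's Def.~\ref{def:raw-PQC} leaves implicit but its proof also needs: each parameter enters through a single gate $e^{-i\phi_jH_j}$ with $\norm{H_j}=\tfrac12$ (equivalently, generators with eigenvalues $\pm\tfrac12$, which is what makes the $\pm\pi/2$ shift rule valid and what makes your constant $2\norm{H_j}$ equal to $1$); you are right to flag this explicitly, and your bookkeeping of repeated and mixed derivatives as powers of the localized superoperator $\mathcal{C}_j$ is sound precisely because each parameter occurs in exactly one factor of the conjugation chain. What your route buys is slight extra generality and robustness: the commutator bound needs only $\norm{H_j}\le\tfrac12$ (any Hermitian generator, not necessarily two-eigenvalue, for which no exact two-term shift rule exists), and it degrades gracefully to $D\le\max_j2\norm{H_j}$ under other normalizations. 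What the paper's route buys is that it stays entirely at the level of physically evaluable expectation values of the same circuit at shifted parameters, which matches how these gradients are actually estimated, and it avoids introducing the auxiliary observable $O$ and the Heisenberg-picture machinery.
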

See Appendix \ref{sec:raw-PQC-bound} for a proof.

\begin{corollary}
Any \textsc{raw-PQC} policy as defined in Def.~\ref{def:raw-PQC} can benefit from a full quadratic speed-up from quantum numerical gradient estimation. 
\end{corollary}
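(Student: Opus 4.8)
\section*{Proof proposal}

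The plan is to derive this Corollary by combining two ingredients already in place: the numerical policy gradient algorithm theorem, which expresses both the quantum and classical sample complexities in terms of the smoothness parameter $D$, and Lemma \ref{lem:raw-PQC-bound}, which bounds $D\le 1$ for every \textsc{raw-PQC} policy. First I would pin down what a \emph{full} quadratic speed-up requires: reading off the theorem, the quantum estimator uses $\widetilde{\O}(\sqrt{d}\,DT^2\Rmax/(\varepsilon(1-\gamma)))$ episodes while the classical estimator uses $\widetilde{\O}(d\,(DT^2\Rmax/(\varepsilon(1-\gamma)))^2)$ episodes, so the former is the square root of the latter in every problem parameter. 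The speed-up is therefore ``full'' exactly when two conditions hold: (i) the dimension scaling in the quantum case is genuinely $\sqrt{d}$ rather than a larger power $d^\sigma$, and (ii) the bound on $D$ is a dimension-independent constant, so that no hidden $d$-dependence is absorbed into $D$.

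To verify (i) I would appeal to Theorem \ref{thm:num-gradient}, whose dimension scaling is $d^{\max\{\sigma,1/2\}}$ and collapses to $\sqrt{d}$ precisely when the Gevrey parameter of the value function satisfies $\sigma\le 1/2$. Lemma \ref{lem:gevrey-valuefct} already establishes that the value function of any policy is Gevrey with $\sigma=0$, so (i) holds unconditionally. For (ii) I would invoke Lemma \ref{lem:raw-PQC-bound}: a \textsc{raw-PQC} policy satisfies $D\le 1$, a bound independent of $d$ and of all other parameters. Substituting $D\le 1$ into the two complexities leaves the quantum estimator at $\widetilde{\O}(\sqrt{d}\,T^2\Rmax/(\varepsilon(1-\gamma)))$ episodes and the classical one at $\widetilde{\O}(d\,(T^2\Rmax/(\varepsilon(1-\gamma)))^2)$ episodes, the former being exactly the square root of the latter.

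Because both conditions reduce directly to earlier lemmas, I do not anticipate a real obstacle here; the statement is genuinely a corollary. The only point deserving care is condition (ii): one must use that Lemma \ref{lem:raw-PQC-bound} gives a constant bound on $D$, not merely a finite one, since otherwise a $d$-growing $D$ could in principle spoil the quadratic gain in the dimension. The form of that Lemma settles this, so the full quadratic speed-up for \textsc{raw-PQC} policies follows.
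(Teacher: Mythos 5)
Your proposal is correct and matches the paper's (implicit) argument: the corollary follows immediately by substituting the bound $D\leq 1$ from Lemma~\ref{lem:raw-PQC-bound} into the complexities of the numerical policy gradient theorem, with the $\sqrt{d}$ scaling already guaranteed by the Gevrey parameter $\sigma=0$ from Lemma~\ref{lem:gevrey-valuefct}. Your added care in noting that the bound on $D$ must be dimension-independent (not merely finite) is a sensible point, but it introduces no deviation from the paper's route.
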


\vspace{1em}

\section{Analytical gradient estimation}

We obtain our analytical policy gradient algorithm by applying the quantum multivariate Monte Carlo algorithm of Sec.~\ref{sec:quantum-MC-sub} to the formulation of the gradient given by the policy gradient theorem (see Sec.~\ref{sec:analytical-grad-prelim}). The random variable in this formulation
\begin{equation}
	X(\tau) = \sum_{t=0}^{T-1}\grad\log{\policy(a_t|s_t)} R(\tau)
\end{equation}
can easily be bounded in $\ell_p$-norm given an upper bound on the return $R(\tau)$ and the $\ell_p$-norm of the gradient of the log-policy:
\begin{equation}\label{eq:Bp}
	B_p =  \max_{s\in\S, a\in\A}\norm{\grad\log\policy(a|s)}_p.
\end{equation}
With this notation we can show the following Theorem:

\begin{thm}{Analytical policy gradient algorithm}{}
Let $\policy$ be a policy parametrized by a vector $\params\in\R^d$, that can be used to interact with a quantum-accessible MDP $\M=(\S,\A,P,R,\Rmax,T,\gamma)$, and such that $\policy$ has a bounded smoothness parameter $B_p$ for some $p\geq1$, defined in Eq.~(\ref{eq:Bp}). Call $\xi(p) = \max\{0,\frac{1}{2}-\frac{1}{p}\}$. The gradient of the value function corresponding to this policy, $\grad\valuefct(s_0)$, can be evaluated to error $\varepsilon$ in $\ell_\infty$-norm, using
\begin{equation}
	\widetilde{\O}\left( d^{\xi(p)}\frac{B_pT\Rmax}{\varepsilon(1-\gamma)}\right)
\end{equation}
length-$T$ episodes of interaction with the environment using a quantum analytical gradient estimator, while a classical analytical gradient estimator needs
\begin{equation}
	\widetilde{\O}\left( \left(\frac{B_pT\Rmax}{\varepsilon(1-\gamma)}\right)^2\right)
\end{equation}
length-$T$ episodes of interaction with the environment.\footnote{Note that the classical estimator still has a logarithmic dependence in $d$, hidden in the $\widetilde{\O}$ notation.} Notably, for $p\in[1,2]$, we get a full quadratic speed-up in the quantum setting.
\end{thm}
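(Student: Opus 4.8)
The plan is to recognize the policy gradient as the mean of the $d$-dimensional random variable $X(\tau)$ and then invoke the multivariate Monte Carlo estimator of Theorem~\ref{thm:MVMC} essentially off the shelf. By the policy gradient theorem, $\grad\valuefct(s_0) = \E_\tau[X(\tau)]$ with $X(\tau) = \sum_{t=0}^{T-1}\grad\log\policy(a_t|s_t)\,R(\tau)$, so an $\varepsilon$-precise $\ell_\infty$ estimate of the gradient is exactly what Theorem~\ref{thm:MVMC} produces, once I supply (i) quantum-sample access to $X$ in the sense of Definition~\ref{def:quantum-exp} and (ii) an $\ell_p$-norm bound $B$ on $X$.

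First I would derive the bound. Since $R(\tau)=\sum_{t'=0}^{T-1}\gamma^{t'}r_{t'}$ is a scalar with $\abs{R(\tau)} \leq \Rmax\sum_{t'=0}^{T-1}\gamma^{t'} \leq \Rmax/(1-\gamma)$, the triangle inequality and the definition of $B_p$ give $\norm{X(\tau)}_p = \abs{R(\tau)}\,\norm{\sum_{t}\grad\log\policy(a_t|s_t)}_p \leq \abs{R(\tau)}\sum_t\norm{\grad\log\policy(a_t|s_t)}_p \leq B_pT\Rmax/(1-\gamma) =: B$. Substituting this $B$ into the $\widetilde{\O}(Bd^{\xi(p)}/\varepsilon)$ query bound of Theorem~\ref{thm:MVMC} already yields the advertised quantum count.

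Next I would construct the two oracles. The unitary $U_P$ of Definition~\ref{def:quantum-exp} is precisely $U_{P(\tau)}$ from Lemma~\ref{lem:U-traj}, which prepares $\sum_\tau\sqrt{P_{\params}(\tau)}\ket{\tau}$ over length-$T$ trajectories using $\O(T)$ calls to $\P$ and $\Pi$. For the binary oracle $\B_X$, I would compute $R(\tau)$ coherently into an ancilla via $U_{R(\tau)}$ of Lemma~\ref{lem:U-ret} ($\O(T)$ calls to $\Rwd$), then simulate the environment-independent classical evaluation of the $d$-dimensional vector $\sum_t\grad\log\policy(a_t|s_t)$ from the $(s_t,a_t)$ pairs recorded in $\ket{\tau}$, multiply by the scalar $R(\tau)$, and uncompute the scratch registers, realizing $\ket{\tau}\ket{0}\mapsto\ket{\tau}\ket{X(\tau)}$. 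Each query to $X$ then costs exactly one length-$T$ episode, so the episode count coincides with the query count $\widetilde{\O}(d^{\xi(p)}B_pT\Rmax/(\varepsilon(1-\gamma)))$. The classical bound follows from the classical multivariate estimator of Lemma~\ref{lem:classical-MVMC} applied with the same $B$ (note $\abs{X_j(\tau)}\leq\norm{X(\tau)}_\infty\leq\norm{X(\tau)}_p\leq B$ per coordinate), costing $\widetilde{\O}(B^2/\varepsilon^2)$ episodes up to a $\log d$ factor from a union bound over coordinates; and since $\xi(p)=\max\{0,\tfrac12-\tfrac1p\}=0$ for $p\in[1,2]$, the quantum bound sheds its dimension dependence and the speed-up is quadratic.

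I expect the main obstacle to be the construction of $\B_X$ for a genuinely $d$-dimensional output, in particular the coherent evaluation of $\grad\log\policy(a|s)$ and the storage of a $d$-component vector to sufficient precision, rather than the norm bound, which is routine. I would also note that Theorem~\ref{thm:MVMC} is stated after renormalizing so that $\E[\norm{X}_2]\leq 1$; the remark following that theorem shows this renormalization is absorbed into $B$ and $\varepsilon$ without extra cost, so it does not affect the count here.
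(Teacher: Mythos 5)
Your proposal is correct and follows essentially the same route as the paper: identify $\grad\valuefct(s_0)$ as $\E_\tau[X(\tau)]$ via the policy gradient theorem, realize the quantum-sample access of Definition~\ref{def:quantum-exp} with $U_{P(\tau)}$ and $U_{R(\tau)}$ plus a simulated classical circuit for $\sum_t\grad\log\policy(a_t|s_t)$, bound $\norm{X(\tau)}_p \leq TB_p\Rmax/(1-\gamma)$, and feed this into Theorem~\ref{thm:MVMC} (quantum) and Lemma~\ref{lem:classical-MVMC} (classical). Your explicit derivation of the norm bound and your remark on the renormalization in Theorem~\ref{thm:MVMC} are consistent with what the paper delegates to Lemma~\ref{lem:bound-valuefct} and the discussion following that theorem.
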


\begin{proof}
We apply Theorem \ref{thm:MVMC} for the random variable $X(\tau) = \sum_{t=0}^{T-1}\grad\log{\policy(a_t|s_t)} \sum_{t'=0}^{T-1} \gamma^{t'} r_{t'}$ distributed according to $P_{\params}(\tau) = \prod_{t=0}^{T-1}\policy(a_t|s_t)P(s_{t+1}|s_t,a_t)$.

To construct the appropriate quantum access to $X(\tau)$ (see Def.~\ref{def:quantum-exp}), we use the unitary $U_{P(\tau)}$ defined in Lemma \ref{lem:U-traj} to implement $U_P$, and implement the binary oracle $\B_X$ using the unitary $U_{R(\tau)}$ defined in Lemma \ref{lem:U-ret} along with a simulated classical circuit that multiplies the returns $R(\tau) = \sum_{t'=0}^{T-1} \gamma^{t'} r_{t'}$ with $\sum_{t=0}^{T-1}\grad\log{\policy(a_t|s_t)}$.

From Lemma \ref{lem:bound-valuefct}, we get the bound $\norm{X(\tau)}_p \leq \frac{TB_p\Rmax}{1-\gamma}$, which we use as the bound $B$ in Theorem \ref{thm:MVMC}, resulting in the stated quantum query complexity.\\
	
The classical complexity derives directly from Lemma \ref{lem:classical-MVMC} by noting that $\norm{X(\tau)}_\infty \leq \norm{X(\tau)}_p$ for any $p\geq 1$, and that sampling a trajectory $\tau$ (to compute a sample of $X(\tau)$) requires $1$ episode of interaction with the environment. 
\end{proof}

Note that the total query complexity of the quantum and classical analytical gradient estimators, in terms of the number of calls to $\P$ and $\Rwd$, is $\widetilde{\O}\left( d^{\xi(p)} \frac{B_pT^2\Rmax}{\varepsilon(1-\gamma)}\right)$ and $\widetilde{\O}\left( \frac{B_p^2T^3\Rmax^2}{\varepsilon^2(1-\gamma)^2}\right)$, respectively.

\noindent The $\textsc{softmax}_1$\textsc{-PQC} policies are then a perfect fit for these algorithms as we can show that:

\begin{lemma}\label{lem:softmax1-PQC-bound}
Any $\textsc{softmax}_1$\textsc{-PQC} policy as defined in Def.~\ref{def:softmax1-PQC} satisfies $B_1\leq 2$.
\end{lemma}

See Appendix \ref{sec:softmax1-PQC-bound} for a proof.

\begin{corollary}
Any $\textsc{softmax}_1$\textsc{-PQC} policy as defined in Def.~\ref{def:softmax1-PQC} can benefit from a full quadratic speed-up from quantum analytical gradient estimation. 
\end{corollary}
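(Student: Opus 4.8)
The plan is to differentiate the log-policy directly and then exploit the projector structure imposed by Def.~\ref{def:softmax1-PQC}. Since a $\textsc{softmax}_1$\textsc{-PQC} has $\phis=\varnothing$, the only trainable parameters are the weights $\params=\weights=(w_{a,i})$, so $\grad$ acts solely on these. Writing $h_a := \expval{O_a}_{s,\params} = \sum_i w_{a,i}\expval{P_{a,i}}_s$, where $\expval{P_{a,i}}_s = \ev{P_{a,i}}{\psi_s}$ and $\ket{\psi_s}$ is the (fixed) PQC output state, the defining softmax form gives $\log\policy(a|s) = h_a - \log\sum_{a'}e^{h_{a'}}$.

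First I would compute the partial derivatives. Using $\partial_{w_{a',i'}} h_{a''} = \delta_{a'',a'}\expval{P_{a',i'}}_s$, the standard log-softmax computation yields
\[
	\partial_{w_{a',i'}}\log\policy(a|s) = \bigl(\delta_{a,a'} - \policy(a'|s)\bigr)\expval{P_{a',i'}}_s .
\]
Next I would assemble the $\ell_1$-norm by summing absolute values over all index pairs $(a',i')$. The key structural input is that, because $\{P_{a',i'}\}_{i'}$ is a complete set of orthogonal projectors with $\sum_{i'}P_{a',i'}=I$, the expectation values $\expval{P_{a',i'}}_s$ are nonnegative and satisfy $\sum_{i'}\expval{P_{a',i'}}_s = \ev{I}{\psi_s} = 1$ for each fixed $a'$. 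This lets the absolute value factor cleanly through the inner sum, giving
\[
	\norm{\grad\log\policy(a|s)}_1 = \sum_{a'}\abs{\delta_{a,a'}-\policy(a'|s)}\sum_{i'}\expval{P_{a',i'}}_s = \sum_{a'}\abs{\delta_{a,a'}-\policy(a'|s)} .
\]

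Finally I would bound this residual sum. Splitting off the $a'=a$ term and using $\policy(\cdot|s)\in[0,1]$ together with $\sum_{a'}\policy(a'|s)=1$ yields $\sum_{a'}\abs{\delta_{a,a'}-\policy(a'|s)} = 2\bigl(1-\policy(a|s)\bigr)\leq 2$, uniformly in $s$, $a$ and $\weights$. Taking the maximum over $s$ and $a$ then gives $B_1\leq 2$, as claimed.

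The computation is essentially routine; the one place deserving care — the main obstacle — is the collapse of the sum over $i'$ into the constant $1$. This is precisely where the $\textsc{softmax}_1$ restriction does the work: without projector-completeness $\sum_{i'}P_{a',i'}=I$ one could not discharge $\sum_{i'}\expval{P_{a',i'}}_s$ as a constant, and the bound would instead pick up dependence on the operator norms of the $H_{a,i}$. It is also worth confirming that the degenerate case $P_{a,i}=P_{a',i}=P_i$ permitted by Def.~\ref{def:softmax1-PQC} is covered, which it is, since the argument never invokes any $a$-dependence of the projectors.
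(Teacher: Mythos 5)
Your proposal is correct and follows essentially the same route as the paper's Appendix~\ref{sec:softmax1-PQC-bound}: compute $\partial_{w_{a',i'}}\log\policy(a|s) = (\delta_{a,a'}-\policy(a'|s))\expval{P_{a',i'}}_s$, use nonnegativity and completeness of the projectors to collapse $\sum_{i'}\expval{P_{a',i'}}_s = 1$, and bound the remaining sum over $a'$ by $2$ (you obtain the exact value $2(1-\policy(a|s))$ where the paper settles for the triangle inequality, a cosmetic difference). The only step left implicit is the final appeal to the analytical policy gradient theorem with $p=1$, where $\xi(1)=0$ turns $B_1\leq 2$ into the claimed full quadratic speed-up, but that is immediate from the theorem statement.
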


\section{Discussion}

In this work, we design quantum algorithms to train parametrized policies in quantum-accessible environments. These algorithms can provide up to quadratic speed-ups in the number of interactions needed to evaluate the parameter updates of these policies, provided the environments allow for the appropriate quantum access. Their sample complexity is mostly governed by the number of parameters $d$ of the policy, as well as the smoothness parameters $D$ and $B_p$, depending on whether the numerical or analytical gradient estimation is used. These two smoothness parameters are hard to relate to each other in general, making the performances of these two algorithms hard to compare. Nonetheless, we show that quantum policies previously studied in the literature are smooth with respect to each of these parameters (i.e., with $D$ or $B_1$ in $\O(1)$), which allows them to benefit from a full quadratic speed-up in sample complexity.

We note that in our results we only obtain quadratic speed-ups over specific classical algorithms that exploit the same smoothness conditions as our quantum algorithms. In order to strengthen these results, one would ideally prove matching lower bounds for the classical complexity of this task. We leave as an open question whether known classical lower bounds \cite{alabdulkareem21,LM19j} can be adapted to policy gradient evaluation.

In the analysis of the smoothness of the value function in Appendix \ref{sec:Gevrey-value-fct} (specifically around Eq.~(\ref{eq:gevrey-contrived-ub})), we end up bounding its derivatives $\partial_{\alphas}V^{(t)}_{\policy}(s)$ using a loose upper bound, especially in the regime where the order $k = \abs{\alphas}$ of the derivation is small. The reason for this loose bound is that we need to cast it as a Gevrey condition in order to apply the numerical gradient algorithms of Refs.~\cite{gilyen19,cornelissen19}. We conjecture that a modification of the construction in \cite{gilyen19,cornelissen19} may be possible such as to gain an improvement by a factor of $T$ in the sample complexity of our numerical gradient algorithm, and such that the resulting scaling in $T$ would match that of our analytical gradient estimation algorithm. Side-stepping the Gevrey-formulation of the bound would also remove the need for the condition $\gamma T \geq 2$ that we enforce in the MDP (which is in any case not a very limiting condition, as MDPs of interest usually have a large horizon $T$ and a discount factor $\gamma$ close to $1$).

\section*{Acknowledgments}

SJ acknowledges support from the Austrian Science Fund (FWF) through the projects DK-ALM:W1259-N27 and SFB BeyondC F7102. SJ also acknowledges the Austrian Academy of Sciences as a recipient of the DOC Fellowship. MO was supported by an NWO Vidi grant (Project No. VI.Vidi.192.109). This work was in part supported by the Dutch Research Council (NWO/OCW), as part of the Quantum Software Consortium programme (project number 024.003.037).

\bibliographystyle{unsrt}
\bibliography{references}

\appendix

\section{Simple derivation of the policy gradient theorem\label{appdx:derivation-pgt}}

\begin{theorem}[Policy gradient theorem \cite{sutton00}]\label{thm:pgt}
	Given a policy $\policy$ that generates trajectories $\tau = (s_0, a_0, r_0, s_1, \ldots)$ in a reinforcement learning environment with time horizon $T\in\N\cup\{\infty\}$, the gradient of the value function $\valuefct$ with respect to $\params$ is given by
	\begin{equation}\label{eq:pgt}
		\grad\valuefct (s_0) = \mathbb{E}_{\tau}  \left[ \sum_{t=0}^{T-1}\grad\log{\policy(a_t|s_t)} \sum_{t'=0}^{T-1} \gamma^{t'} r_{t'} \right].
	\end{equation}
\end{theorem}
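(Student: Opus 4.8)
The plan is to apply the score-function (log-likelihood) identity directly at the level of the trajectory distribution. First I would write the value function explicitly as a sum over all length-$T$ trajectories $\tau = (s_0, a_0, \ldots, s_{T-1}, a_{T-1})$, weighting each by its probability $P_{\params}(\tau) = \prod_{t=0}^{T-1}\policy(a_t|s_t)P(s_{t+1}|s_t,a_t)$ and by its return $R(\tau) = \sum_{t'=0}^{T-1}\gamma^{t'}r_{t'}$, so that $\valuefct(s_0) = \sum_{\tau} P_{\params}(\tau) R(\tau)$. Differentiating with respect to $\params$, and noting that $R(\tau)$ carries no dependence on $\params$ (the rewards depend only on the realized states and actions), the gradient falls entirely on the trajectory probability: $\grad\valuefct(s_0) = \sum_{\tau} R(\tau)\,\grad P_{\params}(\tau)$.

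The key algebraic step is the identity $\grad P_{\params}(\tau) = P_{\params}(\tau)\,\grad\log P_{\params}(\tau)$, which reintroduces the probability weight and recasts the sum as an expectation. I would then expand the log-probability: taking logarithms turns the product into a sum, $\log P_{\params}(\tau) = \sum_{t=0}^{T-1}\log\policy(a_t|s_t) + \sum_{t=0}^{T-1}\log P(s_{t+1}|s_t,a_t)$, and the crucial observation is that the environment transition terms $P(s_{t+1}|s_t,a_t)$ are $\params$-independent, so they vanish under $\grad$. This leaves $\grad\log P_{\params}(\tau) = \sum_{t=0}^{T-1}\grad\log\policy(a_t|s_t)$. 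Substituting back yields $\grad\valuefct(s_0) = \E_{\tau}\left[\sum_{t=0}^{T-1}\grad\log\policy(a_t|s_t)\,R(\tau)\right]$, which is exactly the claimed expression once $R(\tau)$ is written out as $\sum_{t'=0}^{T-1}\gamma^{t'}r_{t'}$.

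The main obstacle is the case $T=\infty$, where the sum over trajectories is infinite and interchanging $\grad$ with the summation must be justified. Here I would invoke the discount factor $\gamma < 1$, which guarantees the uniform bound $\abs{R(\tau)} \leq \Rmax/(1-\gamma)$, together with the fact that each $\policy(\cdot|s)$ is a normalized probability distribution. These provide a dominating summable envelope for the terms and their $\params$-derivatives, so a dominated-convergence (or uniform-convergence-of-tails) argument licenses the term-by-term differentiation. For finite $T$ the interchange is immediate since the sum has finitely many terms, so this convergence subtlety is the only non-routine part of the argument.
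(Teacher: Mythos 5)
Your proposal is correct and follows essentially the same route as the paper's own proof: express $\valuefct(s_0)$ as a sum over trajectories weighted by $P_{\params}(\tau)R(\tau)$, apply the log-likelihood identity $\grad P_{\params}(\tau) = P_{\params}(\tau)\grad\log P_{\params}(\tau)$, and drop the $\params$-independent environment transition terms. The only difference is that you additionally address the interchange of differentiation and the infinite sum when $T=\infty$, a point the paper's derivation passes over silently; your dominated-convergence remark is a reasonable (and slightly more careful) addition but not a departure in method.
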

\begin{proof}
Call $R(\tau) = \sum_{t=0}^{T-1} \gamma^{t} r_{t}$ the return of a trajectory $\tau$, and $P_{\params}(\tau) = \prod_{t=0}^{T-1}\policy(a_t|s_t)P_E(s_{t+1}|s_t,a_t)$ the probability of this trajectory, where $P_E$ describes the unknown dynamics of the environment.

Then, we can write the value function as
\begin{equation}
	\valuefct (s_0)  = \sum_{\tau} P_{\params}(\tau) R(\tau)
\end{equation}
and its gradient as
\begin{align}
	\grad\valuefct (s_0) &= \sum_{\tau} \grad P_{\params}(\tau) R(\tau)\\
	&= \sum_{\tau} P_{\params}(\tau) \frac{\grad P_{\params}(\tau)}{P_{\params}(\tau)} R(\tau)\\
	&= \sum_{\tau} P_{\params}(\tau) \grad \log(P_{\params}(\tau)) R(\tau)\\
	&= \sum_{\tau} P_{\params}(\tau) \sum_{t=0}^{T-1} \grad\log(\policy(a_t|s_t)) R(\tau)\\
	&= \mathbb{E}_{\tau} \left[ \sum_{t=0}^{T-1} \grad\log(\policy(a_t|s_t)) R(\tau) \right]
\end{align}
where we have artificially divided and multiplied each term by $P_{\params}(\tau)$ in the second line, and used the independence on $\params$ of the environment dynamics $P_E(s_{t+1}|s_t,a_t)$ in the fourth line.
\end{proof}

\section{Lemmas concerning properties of MDPs}

\subsection{An upper bound on the value function}

\begin{lemma}\label{lem:bound-valuefct}
Consider an MDP $\M = (\S,\A,P,R,\Rmax,T,\gamma)$ as defined in Def.~\ref{def:MDP}. The value function $\valuefct(s_0) = \E\left[\sum_{t=0}^{T-1} \gamma^t r_t\right]$ of any policy $\policy$, evaluated on any initial state $s_0\in\S$ is upper bounded by
\begin{equation}
	\abs{\valuefct(s_0)} \leq \min\left\{T,\frac{1}{1-\gamma}\right\}\Rmax.
\end{equation}
\end{lemma}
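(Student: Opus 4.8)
The plan is to reduce the bound to an elementary estimate on a truncated geometric series, using only the linearity of the expectation and the boundedness of the reward function. First I would pull the absolute value inside the expectation via the triangle inequality: since $\valuefct(s_0) = \E\left[\sum_{t=0}^{T-1} \gamma^t r_t\right]$, we have
\begin{equation}
	\abs{\valuefct(s_0)} \leq \E\left[\sum_{t=0}^{T-1} \gamma^t \abs{r_t}\right].
\end{equation}
Each reward is of the form $r_t = R(s_t,a_t)$ with $R : \S\times\A \to [-\Rmax,\Rmax]$, so $\abs{r_t} \leq \Rmax$ deterministically (independently of the trajectory). Substituting this bound and using $\gamma \geq 0$ makes the summand trajectory-independent, so the expectation disappears and we are left with $\abs{\valuefct(s_0)} \leq \Rmax \sum_{t=0}^{T-1} \gamma^t$.

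It then remains to show the purely arithmetic inequality $\sum_{t=0}^{T-1} \gamma^t \leq \min\{T, \frac{1}{1-\gamma}\}$, which I would establish by proving each of the two bounds separately. For the first, since $\gamma \in [0,1]$ we have $\gamma^t \leq 1$ for every $t \geq 0$, hence $\sum_{t=0}^{T-1}\gamma^t \leq T$. For the second, whenever $\gamma < 1$ the closed form of the finite geometric series gives $\sum_{t=0}^{T-1}\gamma^t = \frac{1-\gamma^T}{1-\gamma} \leq \frac{1}{1-\gamma}$, using $\gamma^T \geq 0$; this bound also holds in the limit $T=\infty$, where convergence is guaranteed precisely because the MDP definition forces $\gamma < 1$ in that case. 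Taking the smaller of the two upper bounds yields the claim.

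There is essentially no hard step here; the only point requiring a moment's care is the edge-case handling that makes the $\min$ well-defined. When $\gamma = 1$ the quantity $\frac{1}{1-\gamma}$ is infinite and the MDP definition rules out $T=\infty$, so the relevant bound is $T$; conversely, when $T=\infty$ we are guaranteed $\gamma<1$, so the relevant bound is $\frac{1}{1-\gamma}$. In every admissible regime at least one of the two bounds is finite and valid, so the minimum is attained and the stated inequality holds uniformly over all initial states $s_0 \in \S$ and all policies $\policy$, as required.
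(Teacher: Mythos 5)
Your proof is correct and follows essentially the same route as the paper's: bound $\abs{r_t}$ by $\Rmax$, reduce to the truncated geometric series, and bound that series by $T$ and by $\frac{1}{1-\gamma}$ separately. Your edge-case discussion of $\gamma=1$ versus $T=\infty$ is slightly more explicit than the paper's, but the argument is the same.
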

\begin{proof}
We have, by definition of the MDP, $r_t \leq \Rmax$, which implies:
\begin{equation}
	\abs{\sum_{t=0}^{T-1} \gamma^t r_t} \leq \sum_{t=0}^{T-1} \gamma^t \abs{r_t} \leq \sum_{t=0}^{T-1} \gamma^t \Rmax \leq
	\begin{cases}
	\frac{\Rmax}{1-\gamma} & \text{if }\gamma<1\\
	T\Rmax & \text{always}
	\end{cases}
\end{equation}
which also holds in expectation value over all trajectories of length $T$.
\end{proof}

\subsection{The effective time horizon of an MDP}
\begin{lemma}\label{lem:effective-horizon}
Consider an MDP $\M = (\S,\A,P,R,\Rmax,T,\gamma)$ as defined in Def.~\ref{def:MDP}, with an infinite horizon $T=\infty, \gamma < 1$ and a value function $\valuefct$. The finite-horizon MDP $\M' = (\S,\A,P,R,\Rmax,T^{*},\gamma)$, where
\begin{equation}
	T^{*} = \left\lceil\frac{\log\left(\frac{\varepsilon(1-\gamma)}{\Rmax}\right)}{\log(\gamma)}\right\rceil = \widetilde{\O}\left(\frac{1}{1-\gamma}\right)
\end{equation}
has a value function $\valuefct'$ that satisfies
\begin{equation}
	\abs{\valuefct(s_0)-\valuefct'(s_0)} \leq \varepsilon
\end{equation}
for any initial state $s_0\in\S$ and any policy $\policy$.
\end{lemma}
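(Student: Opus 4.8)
The plan is to recognize that the only difference between $\valuefct$ and $\valuefct'$ is the truncation of each trajectory at step $T^*$. Since $\M$ and $\M'$ share the same policy $\policy$, transition matrix $P$, and reward function $R$, the distribution over the first $T^*$ steps of any trajectory is identical in the two MDPs. First I would write both value functions as expectations over trajectories, so that $\valuefct(s_0) = \E[\sum_{t=0}^{\infty}\gamma^t r_t]$ and $\valuefct'(s_0) = \E[\sum_{t=0}^{T^*-1}\gamma^t r_t]$. Because the expected contributions of the first $T^*$ terms agree exactly, the difference collapses to the expectation of the discounted tail, i.e. $\valuefct(s_0) - \valuefct'(s_0) = \E[\sum_{t=T^*}^{\infty}\gamma^t r_t]$.

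Next I would bound the tail deterministically using $\abs{r_t} \leq \Rmax$, which is the same estimate already used in Lemma~\ref{lem:bound-valuefct}. This gives $\abs{\sum_{t=T^*}^{\infty}\gamma^t r_t} \leq \Rmax\sum_{t=T^*}^{\infty}\gamma^t = \frac{\gamma^{T^*}}{1-\gamma}\Rmax$, where the geometric sum converges since $\gamma < 1$. As this bound holds uniformly over all trajectories, it passes through the expectation unchanged, yielding $\abs{\valuefct(s_0) - \valuefct'(s_0)} \leq \frac{\gamma^{T^*}}{1-\gamma}\Rmax$ for every initial state $s_0$ and every policy.

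Finally I would verify that the prescribed $T^*$ forces this bound below $\varepsilon$. The requirement $\frac{\gamma^{T^*}}{1-\gamma}\Rmax \leq \varepsilon$ is equivalent to $\gamma^{T^*} \leq \frac{\varepsilon(1-\gamma)}{\Rmax}$; taking logarithms and recalling that $\log\gamma < 0$ (so the inequality flips on division) reduces this to $T^* \geq \frac{\log(\varepsilon(1-\gamma)/\Rmax)}{\log\gamma}$, which is exactly what the ceiling in the definition of $T^*$ guarantees. I do not expect a serious obstacle here: the proof is essentially a geometric-tail estimate. The only two points needing care are the sign reversal when dividing by the negative quantity $\log\gamma$, and the observation that the finite- and infinite-horizon trajectory distributions coincide on their common prefix, which is what licenses identifying the difference of the value functions with the tail. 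The asymptotic estimate $T^* = \widetilde{\O}(1/(1-\gamma))$ then follows from $\log\gamma \approx -(1-\gamma)$ as $\gamma \to 1$, with the logarithmic numerator absorbed into the $\widetilde{\O}$ notation.
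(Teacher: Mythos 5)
Your proposal is correct and follows essentially the same route as the paper's proof: both identify the difference of value functions with the expected discounted tail $\E[\sum_{t=T^*}^{\infty}\gamma^t r_t]$, bound it by $\gamma^{T^*}\Rmax/(1-\gamma)$ via the geometric series, and verify that the ceiling in the definition of $T^*$ (with the sign flip from dividing by $\log\gamma<0$) pushes this below $\varepsilon$. Your added remarks on the coincidence of the trajectory distributions on the common prefix and on the asymptotics of $T^*$ are correct but not needed beyond what the paper states.
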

\begin{proof}
\begin{align}
	\abs{\valuefct(s_0)-\valuefct'(s_0)} &= \abs{\E\left[\sum_{t=0}^{\infty} \gamma^t r_t\right]-\E\left[\sum_{t=0}^{T^{*}-1} \gamma^t r_t\right]}\\
	&=  \abs{\E\left[\sum_{t=T^{*}}^{\infty} \gamma^t r_t\right]}\\
	&\leq \gamma^{T^{*}}\frac{\Rmax}{1-\gamma}\\
	&\leq \frac{\varepsilon(1-\gamma)}{\Rmax}\frac{\Rmax}{1-\gamma} = \varepsilon.
\end{align}
\end{proof}

Because of this lemma, we always assume the time horizon $T$ of an MDP to be in $\widetilde{\O}\left(\frac{1}{1-\gamma}\right)$.

\section{Complexity of a classical MVMC algorithm\label{sec:classical-complexity-MVMC}}

\begin{lemma}[Classical multivariate Monte Carlo estimation]\label{lem:classical-MVMC}
Let $X$ be a $d$-dimensional bounded random variable such that $\norm{X}_\infty \leq B$. Given sampling access to $X$, $\varepsilon,\delta >0$, there exists a classical multivariate mean estimator that returns an $\varepsilon$-precise estimate of $\E[X]$ in $\ell_\infty$-norm with success probability at least $1-\delta$ using
\begin{equation}
	\widetilde{\O}\left(\left(\frac{B}{\varepsilon}\right)^2\right)
\end{equation}
samples of $X$.
\end{lemma}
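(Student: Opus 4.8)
The plan is to reduce the multivariate estimation problem to $d$ independent univariate estimation problems, one per coordinate, and to control the total failure probability by a union bound over the coordinates. First I would observe that the hypothesis $\norm{X}_\infty \leq B$ guarantees that each coordinate $X_j$ of $X$ satisfies $\abs{X_j} \leq B$ almost surely, i.e.\ $X_j$ takes values in the interval $[-B,B]$. Given $N$ i.i.d.\ samples $X^{(1)},\ldots,X^{(N)}$ of $X$, where each sample simultaneously reveals all $d$ coordinates, I would estimate $\E[X_j]$ by the empirical mean $\bar{X}_j = \frac{1}{N}\sum_{i=1}^{N} X_j^{(i)}$ for every $j\in[d]$, and return the vector $\bar{X} = (\bar{X}_1,\ldots,\bar{X}_d)$.

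Next I would apply Hoeffding's inequality coordinatewise. Since each $X_j^{(i)}$ lies in an interval of width at most $2B$, we have $\Pr\!\big[\abs{\bar{X}_j - \E[X_j]} > \varepsilon\big] \leq 2\exp\!\big(-N\varepsilon^2/(2B^2)\big)$. A union bound over the $d$ coordinates then gives
\begin{equation}
	\Pr\!\left[\norm{\bar{X} - \E[X]}_\infty > \varepsilon\right] \leq 2d\exp\!\left(-\frac{N\varepsilon^2}{2B^2}\right).
\end{equation}
Requiring the right-hand side to be at most $\delta$ and solving for $N$ yields the condition $N \geq \frac{2B^2}{\varepsilon^2}\log(2d/\delta)$, so $N = \O\!\big(\frac{B^2}{\varepsilon^2}\log(d/\delta)\big)$ samples suffice. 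Since the logarithmic factor in $d$ and $1/\delta$ is absorbed into the $\widetilde{\O}$ notation, this is $\widetilde{\O}\!\big((B/\varepsilon)^2\big)$, as claimed; I would also remark that the resulting $\log d$ dependence is precisely the logarithmic-in-$d$ overhead referenced in the analytical policy gradient theorem.

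I do not anticipate any serious obstacle: the argument is a standard combination of a per-coordinate concentration bound with a union bound, and each of the $N$ vector samples serves all $d$ coordinates at once, so the sample complexity is governed entirely by the per-coordinate bound. The only points requiring mild care are the bookkeeping of the constant in Hoeffding's inequality (the relevant range is $2B$ rather than $B$) and confirming that the $\log(d/\delta)$ factor is genuinely benign, i.e.\ that it is suppressed by $\widetilde{\O}$ and does not interact with the leading $B^2/\varepsilon^2$ term.
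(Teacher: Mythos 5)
Your proposal is correct and matches the paper's own proof essentially line for line: coordinate-wise empirical means, Hoeffding's inequality with range $2B$ giving the $2\exp(-N\varepsilon^2/(2B^2))$ tail, a union bound over the $d$ coordinates, and $N = \O\bigl(\tfrac{B^2}{\varepsilon^2}\log(d/\delta)\bigr)$. Nothing further is needed.
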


\begin{proof}
Consider the following algorithm:\\
\begin{enumerate}
	\item Collect $N = \left\lceil \frac{2B^2}{\varepsilon^2} \log\left(\frac{2d}{\delta}\right)\right\rceil$ samples of $X$: $\left\{\bm{x}^{(i)}=(x^{(i)}_1, \ldots, x^{(i)}_d)\right\}_{1\leq i \leq N}$.
	\item Compute the $d$ coordinate-wise averages $\widehat{x}_j = \frac{1}{N}\sum_{i=1}^{N} x^{(i)}_j$ and use $\widehat{\bm{x}} = (\widehat{x}_1, \ldots, \widehat{x}_d)$ as an estimate.
\end{enumerate}
Now consider the probability of failure of this algorithm, i.e., that at least one of the estimates is more than $\varepsilon$ away from its expected value:
\begin{align*}
\Pbb\left( \bigvee_{j \in [d]} \abs{\widehat{x}_j - \mathbb{E}[x_j]} \geq \varepsilon \right) &\leq \sum_{j=1}^d \Pbb\left(\abs{\widehat{x}_j - \mathbb{E}[x_j]} \geq \varepsilon \right) & \#\textrm{ union bound}\\
&\leq d \times \max_{j \in [d]} \P\left(\abs{\widehat{x}_j - \mathbb{E}[x_j]} \geq \varepsilon \right) &\\
&\leq 2d\exp\left( -\frac{2N^2\varepsilon^2}{4NB^2} \right)  & \#\textrm{ Hoeffding's bound and bound on $x_j$}\\
&\leq \delta. & \#\textrm{ definition of }N
\end{align*}
Hence, for arbitrary $\varepsilon$ and $\delta$, the $d$ expectations can be estimated to error $\varepsilon$ in the $\ell_\infty$-norm with success probability $1-\delta$ using $N=\mathcal{O} \left( \frac{B^2}{\varepsilon^2} \log(\frac{d}{\delta})\right)$ samples of $X$.
\end{proof}

\section{Proof of Lemma \ref{lem:raw-PQC-bound}\label{sec:raw-PQC-bound}}

\begin{repeatlem}{lem:raw-PQC-bound}
Any \textsc{raw-PQC} policy as defined in Def.~\ref{def:raw-PQC} satisfies $D\leq 1$.
\end{repeatlem}

\begin{proof}
Given a \textsc{raw-PQC} policy $\policy$ as defined in Def.~\ref{def:raw-PQC}, we seek to bound the following quantity:
\begin{equation}
	D = \max_{k\in\N^*} (D_k)^{1/k}
\end{equation}
where
\begin{equation}\label{eq:dk}
	D_k = \max_{s\in\S, \alphas\in[d]^k} \sum_{a\in\A}\abs{\partial_{\alphas} \policy(a|s)}.
\end{equation}

Gradients of this PQC policy can be evaluated using the parameter-shift rule \cite{schuld19}:
\begin{equation}
	\partial_i \policy(a|s) = \partial_i \expval{P_a}_{s,\params} = \frac{\expval{P_a}_{s,\params+\frac{\pi}{2}e_i} - \expval{P_a}_{s,\params-\frac{\pi}{2}e_i}}{2}
\end{equation}
which can easily be generalized to higher-order derivatives \cite{cerezo21b}:
\begin{equation}\label{eq:multi-psr}
	\partial_{\alphas} \policy(a|s) = \frac{1}{2^k} \sum_{\omegas} c_{\omegas} \expval{P_a}_{s,\params + \omegas}
\end{equation}
for $\alphas \in [d]^k, \omegas \in \{0, \pm\frac{\pi}{2}, \pm\pi, \pm\frac{3\pi}{2}\}^d$, and $c_{\omegas} \in \Z$ such that $\sum_{\omegas} \abs{c_{\omegas}} = 2^k$.

Now, by combining Eq.~(\ref{eq:dk}) and (\ref{eq:multi-psr}), we get:
\begin{align}
D_k &= \max_{s\in\S, \alphas\in[d]^k} \sum_{a\in\A}\abs{\frac{1}{2^k}\sum_{\omegas} c_{\omegas} \expval{P_a}_{s,\params + \omegas}}\\
	&\leq \max_{s\in\S, \alphas\in[d]^k} \frac{1}{2^k} \sum_{a\in\A}\sum_{\omegas} \abs{c_{\omegas}} \abs{\expval{P_a}_{s,\params + \omegas}}\\
	&= \max_{s\in\S, \alphas\in[d]^k} \frac{1}{2^k} \sum_{\omegas} \abs{c_{\omegas}} \sum_{a\in\A} \abs{\expval{P_a}_{s,\params + \omegas}} = 1.
\end{align}
where in the last line we used $\sum_{a} P_a = I$ in the definition of the \textsc{raw-PQC} policy and $\sum_{\omegas} \abs{c_{\omegas}} = 2^k$.\\
Since this bound is valid for all $k\in\N^*$, then also $D\leq1$.
\end{proof}

\section{Proof of Lemma \ref{lem:softmax1-PQC-bound}\label{sec:softmax1-PQC-bound}}

\begin{repeatlem}{lem:softmax1-PQC-bound}
Any $\textsc{softmax}_1$\textsc{-PQC} policy as defined in Def.~\ref{def:softmax1-PQC} satisfies $B_1\leq 2$.
\end{repeatlem}

\begin{proof}
Given a $\textsc{softmax}_1$\textsc{-PQC} policy $\policy$ as defined in Def.~\ref{def:softmax1-PQC}, we seek to bound the following quantity:
\begin{equation}
	B_1 =  \max_{s\in\S, a\in\A}\norm{\grad\log\policy(a|s)}_1.
\end{equation}

From the definition of this policy, we have:
\begin{equation}
	\expval{O_a}_{s,\params} = \ev{\sum_i w_{a,i} P_{a,i}}{\psi_{s}}
\end{equation}
such that $\sum_i P_{a,i}=I$ and $P_{a,i}P_{a,i'} = \delta_{i,i'}P_{a,i}$, $\forall a\in\A$. This implies that
\begin{equation}
	\partial_{w_{a',i}}\expval{O_a}_{s,\params} = \delta_{a,a'}\ev{P_{a',i}}{\psi_{s}} = \delta_{a,a'}\expval{P_{a',i}}_{s}.
\end{equation}
Since this is a \textsc{softmax-PQC}, it follows from Lemma 1 in \cite{jerbi21} that:
\begin{align}
	\partial_{w_{a',i}}\log\policy(a|s) &= \partial_{w_{a',i}}\expval{O_a}_{s,\params} - \sum_{a''\in\A} \policy(a''|s) \partial_{w_{a',i}}\expval{O_{a''}}_{s,\params}\\
	&= \delta_{a,a'} \expval{P_{a',i}}_{s} - \policy(a'|s) \expval{P_{a',i}}_{s}.
\end{align}
Therefore,
\begin{align}
	\norm{\grad\log\policy(a|s)}_1  &= \sum_{a',i} \abs{\partial_{w_{a',i}}\log\policy(a|s)}\\
	&\leq \sum_{a',i} \left[ \abs{\delta_{a,a'} \expval{P_{a',i}}_{s}} + \abs{\policy(a'|s) \expval{P_{a',i}}_{s}} \right]\\
	&\leq \sum_i \expval{P_{a,i}}_{s} + \sum_{a',i} \policy(a'|s) \expval{P_{a',i}}_{s}\\
	&\leq 1 + \max_{a'} \sum_i \expval{P_{a',i}}_{s}\\
	&\leq 2
\end{align}
where we made use of the triangle inequality in the first inequality, the positivity of $\expval{P_{a,i}}_{s}$ and $\policy(a'|s)$ in the second inequality, and the normalization constraint of $\{P_{a,i}\}_i$ in the third and fourth inequalities.
\end{proof}

\section{Gevrey condition of value functions\label{sec:Gevrey-value-fct}}

In this section, we investigate the smoothness of the value function, in terms of the smoothness of the policy. More precisely, we prove the following lemma:

\begin{lemma}\label{lem:gevrey-valuefct}
Let $\policy$ be a parametrized policy with a bounded smoothness parameter $D$, defined in Eq.~(\ref{eq:D}). Let $\M = (\S,\A,P,R,\Rmax,T,\gamma)$ be an MDP as defined in Def.~\ref{def:MDP} with $T\gamma \geq 2$. Then the value function $\valuefct(s_0)$ associated to the policy $\policy$ in $\M$, as a function of the policy parameters $\params$, satisfies the Gevrey conditions of Def.~\ref{def:gevrey_functions} for $\sigma = 0, M=\frac{4\Rmax}{1-\gamma}$ and $c=DT^2$.
\end{lemma}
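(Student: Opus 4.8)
The plan is to bound the higher-order partial derivatives $\partial_{\alphas}\valuefct(s_0)$ directly by induction on the time horizon, reducing everything to the smoothness parameter $D$ of the policy. Recall that $\valuefct(s_0) = \sum_{\tau} P_{\params}(\tau) R(\tau)$, where the trajectory probability factorizes as $P_{\params}(\tau) = \prod_{t=0}^{T-1}\policy(a_t|s_t)P(s_{t+1}|s_t,a_t)$, and only the policy factors depend on $\params$. Since $|R(\tau)| \le \Rmax/(1-\gamma)$ by Lemma~\ref{lem:bound-valuefct}, the core task is to bound $\sum_{\tau}|\partial_{\alphas}P_{\params}(\tau)|$ for any multi-index $\alphas\in[d]^k$, and then the value-function derivative satisfies $|\partial_{\alphas}\valuefct(s_0)| \le \frac{\Rmax}{1-\gamma}\sum_{\tau}|\partial_{\alphas}P_{\params}(\tau)|$.

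\textbf{Key steps.} First I would apply the generalized product rule (Leibniz rule) to differentiate $P_{\params}(\tau)$: a $k$-th order derivative distributes the $k$ differentiation indices among the $T$ policy factors $\policy(a_t|s_t)$, so that $\partial_{\alphas}P_{\params}(\tau)$ is a sum, over all ways of partitioning the index set into groups assigned to the $T$ time steps, of products $\prod_t \partial_{\alphas_t}\policy(a_t|s_t)$ times the unchanged environment factors $\prod_t P(s_{t+1}|s_t,a_t)$. Second, I would sum the absolute value over all trajectories $\tau$: the environment transition factors sum to $1$ when summing over the $s_{t+1}$'s, and summing $\sum_{a_t}|\partial_{\alphas_t}\policy(a_t|s_t)|$ over each action yields a factor bounded by $D_{|\alphas_t|} \le D^{|\alphas_t|}$ by the definition of $D$ in Eq.~(\ref{eq:D}). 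This gives a bound of the form $\sum_{\tau}|\partial_{\alphas}P_{\params}(\tau)| \le D^{k}\cdot(\text{number of ways to distribute $k$ indices among $T$ slots})$. Third, I would count these distributions: the number of ways to assign $k$ labeled indices to $T$ time steps is $T^k$, so the combinatorial factor is at most $T^k$, yielding $|\partial_{\alphas}\valuefct(s_0)| \le \frac{\Rmax}{1-\gamma}(DT)^k$. Finally, I would cast this into the Gevrey form $\frac{M}{2}c^k(k!)^{\sigma}$ with $\sigma=0$; here is where the slack appears, since I have $(DT)^k$ but the lemma claims $c=DT^2$. The extra factor of $T^k$ must be absorbed either to dominate lower-order cross terms from the product rule, or to satisfy the $\sigma=0$ formulation uniformly (the paper's Discussion explicitly flags this as a loose bound, especially for small $k$, and ties the hypothesis $\gamma T\ge 2$ to making this bookkeeping go through).

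\textbf{The hard part} will be the precise combinatorial accounting in the Leibniz expansion — in particular, verifying that the crude factor $T^k$ (or $T^{2k}$) genuinely dominates all the multinomial coefficients arising when several indices land on the same policy factor, and simultaneously handling the $T=\infty$ case. For infinite horizon one cannot directly write $T^k$; instead one must exploit the discount factor $\gamma$, redistributing a $\gamma^{t'}$ weight from $R(\tau)$ across the sum so that the geometric decay tames the otherwise-divergent trajectory sum, which is presumably exactly why the condition $\gamma T \ge 2$ enters and why $c$ carries a $T^2$ rather than $T$. I expect the proof to first establish a per-time-step value bound $\partial_{\alphas}V^{(t)}_{\policy}(s)$ and then aggregate, with the main technical burden being to show the aggregated bound remains a clean Gevrey condition with $\sigma=0$ rather than requiring a factorial-growth $\sigma>0$ term.
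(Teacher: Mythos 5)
Your proposal is correct, and it takes a genuinely different --- and in fact sharper --- route than the paper. The paper's proof works with the Bellman recursion $V^{(t+1)}_{\policy}(s) = \sum_a \policy(a|s)\bigl[R(s,a)+\gamma\sum_{s'}P(s'|s,a)V^{(t)}_{\policy}(s')\bigr]$, telescopes $\partial_{\alphas}V^{(T)}_{\policy}$ into increments $\partial_{\alphas}\bigl[V^{(t+1)}_{\policy}-V^{(t)}_{\policy}\bigr]$, and bounds each increment by a sum over integer partitions of $k$ that counts how the derivative indices distribute over the $t+1$ policy factors; the hypothesis $\gamma T\ge 2$ enters only at the very end, where $\sum_{r}(\gamma t)^r r^k$ is bounded by twice its last term, and it is precisely this bookkeeping that inflates the constant to $c=DT^2$. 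Your direct Leibniz expansion of $\sum_\tau P_{\params}(\tau)R(\tau)$ avoids all of that: the $k$-fold iterated partial derivative of the product of $T$ policy factors has \emph{exactly} $T^k$ terms, one per assignment function $[k]\to[T]$, with no additional multinomial coefficients (the derivative indices are labeled, so the "hard part" you anticipate is not actually there), and each term sums over trajectories --- performed backwards, action sum then state sum --- to at most $\prod_t D_{\abs{\alphas_t}}\le D^k$, using the uniformity in $s$ of $\sum_a\abs{\partial_{\alphas_t}\policy(a|s)}\le D_{\abs{\alphas_t}}$ together with the convention $D_0=1$ coming from normalization of $\policy(\cdot|s)$. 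This yields $\abs{\partial_{\alphas}\valuefct(s_0)}\le \frac{\Rmax}{1-\gamma}(DT)^k$, which implies the stated Gevrey condition a fortiori (as $DT\le DT^2$) and, notably, needs neither the hypothesis $\gamma T\ge 2$ nor the extra factor of $T$ --- it essentially delivers the factor-$T$ improvement that the authors themselves conjecture in the Discussion. The one point you should still nail down is the infinite-horizon case, where $T^k$ is meaningless; the paper handles this by truncating to an effective horizon $T^{*}=\widetilde{\O}\bigl(\tfrac{1}{1-\gamma}\bigr)$ (Lemma \ref{lem:effective-horizon}), a standing assumption under which your argument goes through unchanged.
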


As a first step, we observe that we can use the Markovian nature of an MDP to describe the value function as the limit of a sequence of improving approximations, by iteratively increasing the time horizon at which we evaluate the MDP. More precisely, we define inductively, for all states $s \in \S$ and time horizons $t \geq 0$,
\[V^{(t+1)}_{\policy}(s) = \sum_{a \in \A} \policy(a|s) \left[R(s,a) + \gamma \sum_{s' \in \S} P(s'|s,a) V_{\policy}^{(t)}(s')\right],\]
where for the induction basis, we use $V^{(0)}_{\policy}(s) = 0$, for all states $s \in \S$. We easily check that the value function at time horizon $T \in \N \cup \{\infty\}$ of an MDP, $V_{\policy}(s)$, is indeed given by letting $t$ go to $T$ in the above definition.

This recursive definition of approximations to the value function provides us with a convenient handle on its derivatives. In particular, for all integers $k,t > 0$ and sequences $\alphas \in [d]^k$, where $d$ is the number of parameters of $\params$, i.e., $\params \in \R^d$, we obtain that
\begin{equation}
    \label{eq:value-function-recursive-derivatives}
    \partial_{\alphas}\left[V^{(t+1)}_{\policy}(s) - V^{(t)}_{\policy}(s)\right] = \gamma \partial_{\alphas} \left[\sum_{a \in \A} \policy(a|s) \sum_{s' \in \S} P(s'|s,a) (V^{(t)}_{\policy}(s') - V^{(t-1)}_{\policy}(s'))\right].
\end{equation}
Since the value function with time horizon $t = 0$ vanishes, we can express the partial derivatives at any given time horizon $t$ as the telescoping sum
\begin{align*}
    \partial_{\alphas}V^{(t)}_{\policy}(s) = \sum_{t'=0}^{t-1} \partial_{\alphas}\left[V^{(t'+1)}_{\policy}(s) - V^{(t')}_{\policy}(s)\right].
\end{align*}
The main idea of this section is to expand the expression on the right-hand side in the above equation, using the recursive characterization provided in Eq.~(\ref{eq:value-function-recursive-derivatives}).

We start by defining some shorthand notation:
\begin{definition}\label{def:g-U}
    Let $\M=(\S,\A,P,R,\Rmax,T,\gamma)$ be an MDP, and $\policy$ be a policy parametrized by $\params \in \R^d$. Let $V^{(t)}_{\policy}$ be its value function with horizon $t > 0$, and for all $k, t > 0$, let
    \[g(k,t) = \max_{s \in \S, \alphas \in [d]^k} \left|\partial_{\alphas} \left[ V^{(t+1)}_{\policy}(s) - V^{(t)}_{\policy}(s)\right]\right|, \qquad \text{and} \qquad U(k,t) = \sum_{t'=0}^{t-1} g(k,t').\]
\end{definition}

We observe that
\begin{equation}\label{eq:Ukt-bound}
|\partial_{\alphas}V^{(t)}_{\policy}(s)| \leq \sum_{t'=0}^{t-1} \left|\partial_{\alphas}\left[V^{(t'+1)}_{\policy}(s) - \partial_{\alphas}V^{(t')}_{\policy}(s)\right]\right| \leq \sum_{t'=0}^{t-1} g(k,t') = U(k,t),
\end{equation}
and hence to bound the smoothness of (the approximations to) the value function, it suffices to find a good upper bound on $U(k,t)$. The previous definition already foreshadows that the resulting expression explicitly depends on the smoothness of the policy through the parameter $D$.

In order to upper bound $U(k,t)$, we first find an expression that upper bounds $g(k,t)$, which is the objective of the following lemma.

\begin{lemma}\label{lem:gevrey-lem1}
	Let $\M=(\S,\A,P,R,\Rmax,T,\gamma)$ be an MDP, and $\policy$ be a policy parametrized by $\params \in \R^d$. Let $V^{(t)}_{\policy}$ be its value function with horizon $t > 0$. For all $k \in \N$, let $\Lambda_k$ be the set of all partitions of $k$, where every partition $\lambda \in \Lambda_k$ is a multiset of positive integers that sums to $k$. By $\{\lambda\}$, we denote the set of elements in $\lambda$, i.e., without repetition. We let $\#\ell(\lambda)$ be the number of occurrences of $\ell$ in the multiset $\lambda$, and let $\#\lambda = \{\#\ell(\lambda) : \ell \in \{\lambda\}\}$ be the multiset of occurrences in $\lambda$. For all non-negative integers $k,t$, we have
	\[g(k,t) \leq \gamma^t\Rmax \cdot \sum_{\lambda \in \Lambda_k} \binom{k}{\lambda} \binom{|\lambda|}{\#\lambda} \binom{t+1}{|\lambda|} \prod_{\ell \in \lambda} D_{\ell}.\]
\end{lemma}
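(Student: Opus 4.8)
The plan is to first turn the recursive identity~(\ref{eq:value-function-recursive-derivatives}) into a one-step scalar recurrence for $g(k,t)$, and then solve that recurrence in closed form. Starting from~(\ref{eq:value-function-recursive-derivatives}), I apply the general Leibniz (product) rule to $\partial_{\alphas}$ acting on the product $\policy(a|s)\cdot\big[\sum_{s'}P(s'|s,a)(V^{(t)}_{\policy}(s')-V^{(t-1)}_{\policy}(s'))\big]$. Writing $\alphas\in[d]^k$ as an indexed sequence, the derivative splits as a sum over subsets $S\subseteq[k]$, where $\partial_{\alphas_S}$ hits the policy factor and $\partial_{\alphas_{\bar S}}$ hits the bracket; crucially, $P(s'|s,a)$ and $R(s,a)$ carry no $\params$-dependence, so only the differences $V^{(t)}_{\policy}-V^{(t-1)}_{\policy}$ are differentiated inside the bracket. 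Taking absolute values, bounding $\sum_{a}\abs{\partial_{\alphas_S}\policy(a|s)}\le D_{|S|}$ (definition of $D$), $\abs{\partial_{\alphas_{\bar S}}(V^{(t)}_{\policy}(s')-V^{(t-1)}_{\policy}(s'))}\le g(k-|S|,t-1)$, and using $\sum_{s'}P(s'|s,a)=1$, and finally grouping subsets by their size $j=|S|$ (there are $\binom{k}{j}$ of them), I obtain the recurrence
\[ g(k,t)\le \gamma\sum_{j=0}^{k}\binom{k}{j}\,D_j\,g(k-j,t-1), \]
with the convention $D_0:=1$ (arising from $\sum_a\policy(a|s)=1$ when no derivative hits the policy). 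The base case $g(k,0)\le \Rmax D_k$ follows directly, since $V^{(1)}_{\policy}(s)=\sum_a\policy(a|s)R(s,a)$ and $\abs{R}\le\Rmax$.

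Next I would unroll this recurrence $t$ times down to the base case. Each application strips off one part $j_i$ and one factor $\gamma$, and the telescoping product of binomials $\binom{k}{j_1}\binom{k-j_1}{j_2}\cdots$ collapses to a single multinomial coefficient. Setting $j_{t+1}=k-\sum_{i=1}^{t}j_i$ so that the base-case factor $D_{j_{t+1}}$ is absorbed on equal footing, this yields the ordered-composition bound
\[ g(k,t)\le \gamma^t\Rmax \sum_{\substack{j_1,\dots,j_{t+1}\ge 0\\ j_1+\cdots+j_{t+1}=k}} \binom{k}{j_1,\dots,j_{t+1}}\prod_{i=1}^{t+1}D_{j_i}, \]
which can equally be proved by a short induction on $t$, verifying that the right-hand side satisfies the same recurrence and base case.

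The final, and most delicate, step is the purely combinatorial re-indexing of this composition sum into the partition sum stated in the lemma. I would group the compositions $(j_1,\dots,j_{t+1})$ according to the multiset $\lambda$ of their \emph{nonzero} entries, which is a partition of $k$. Because $D_0=1$, the zero entries contribute nothing to $\prod_i D_{j_i}=\prod_{\ell\in\lambda}D_\ell$, and the multinomial coefficient reduces to $\binom{k}{\lambda}=k!/\prod_{\ell\in\lambda}\ell!$ for every composition in a given group. The number of compositions mapping to a fixed $\lambda$ with $|\lambda|$ positive parts is then $\binom{t+1}{|\lambda|}$ (choosing which of the $t+1$ slots are nonzero) times $\binom{|\lambda|}{\#\lambda}$ (the number of distinct orderings of the multiset $\lambda$). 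Summing these equal contributions over all $\lambda\in\Lambda_k$ gives exactly $\gamma^t\Rmax\sum_{\lambda\in\Lambda_k}\binom{k}{\lambda}\binom{|\lambda|}{\#\lambda}\binom{t+1}{|\lambda|}\prod_{\ell\in\lambda}D_\ell$, as claimed. I expect the main obstacle to be precisely this bookkeeping of multiplicities---matching the multinomial coefficient $\binom{k}{\lambda}$, the ordering factor $\binom{|\lambda|}{\#\lambda}$, and the slot-choice factor $\binom{t+1}{|\lambda|}$---rather than the analytic estimates, which are routine once the Leibniz expansion and the $\params$-independence of $P$ and $R$ are exploited.
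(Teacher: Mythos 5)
Your proof is correct and follows essentially the same route as the paper: both arguments expand the recursion for $V^{(t+1)}_{\policy}-V^{(t)}_{\policy}$ across the $t+1$ policy factors and count distributions of the $k$ derivative indices, arriving at the identical factors $\binom{k}{\lambda}\binom{|\lambda|}{\#\lambda}\binom{t+1}{|\lambda|}$ with the same combinatorial meanings. Your version merely makes the paper's direct count explicit by passing through the one-step recurrence, the ordered-composition sum, and the composition-to-partition regrouping, which is a sound (and arguably more rigorous) rendering of the same argument.
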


\begin{proof}
	We give a combinatorial argument. To that end, let $k,t \geq 0$ be integers, and let $\alphas \in [d]^k$ be a finite sequence of indices with respect to which we want to compute the partial derivative of $V^{(t)}_{\policy}$. The main idea is to apply the product rule to the expression on the right-hand side of Eq.~(\ref{eq:value-function-recursive-derivatives}).
	
	In particular, by repeatedly substituting the right-hand side of Eq.~(\ref{eq:value-function-recursive-derivatives}) into itself, we obtain that there are $t+1$ probabilities $\policy(a|s)$ to which we can associate any given index of $\alphas$. Thus, we count the number of occurrences where the distribution of indices in $\alphas$ across the $t+1$ different factors forms the partition $\lambda \in \Lambda_k$. We call this number $c_{\lambda}$, and we indeed observe that all these terms are upper bounded by $\prod_{\ell \in \lambda} D_{\ell}$, which means that it remains to prove that
	\[c_{\lambda} = \binom{k}{\lambda} \binom{|\lambda|}{\#\lambda} \binom{t+1}{|\lambda|}.\]
	
	Observe that we must first choose which factors to assign any derivative to at all, which can be done in $\binom{t+1}{|\lambda|}$ ways. Then, we must decide how many derivatives we are going to assign to each of the selected factors, which can be done in $\binom{|\lambda|}{\#\lambda}$ ways. Finally, we must distribute the $k$ derivatives among the groups, which can be done in $\binom{k}{\lambda}$ ways. This completes the proof.
\end{proof}

Now that we have found an expression that upper bounds $g(k,t)$, we can use it to upper bound $U(k,t)$ as well. This is the objective of the following lemma.

\begin{lemma}
	Let $\M=(\S,\A,P,R,\Rmax,T,\gamma)$ be an MDP, and $\policy$ be a policy parametrized by $\params \in \R^d$. Let $V^{(t)}_{\policy}$ be its value function with horizon $t > 0$. For all non-negative integers $k,t$ such that $\gamma \geq 2/t$, we have
	\[U(k,t) \leq \frac{2|R|_{\max}}{1-\gamma} \cdot (\gamma Dt^2)^k.\]
\end{lemma}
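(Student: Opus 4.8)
The plan is to start from the bound $U(k,t) = \sum_{t'=0}^{t-1} g(k,t')$ and substitute the explicit estimate on $g(k,t')$ proved in the previous lemma, namely
\[
g(k,t') \leq \gamma^{t'}\Rmax \cdot \sum_{\lambda \in \Lambda_k} \binom{k}{\lambda} \binom{|\lambda|}{\#\lambda} \binom{t'+1}{|\lambda|} \prod_{\ell \in \lambda} D_{\ell}.
\]
The first step is to control the product $\prod_{\ell\in\lambda} D_\ell$. Since $D = \max_{k} (D_k)^{1/k}$, we have $D_\ell \leq D^\ell$ for every $\ell$, and because each partition $\lambda\in\Lambda_k$ sums to $k$, this gives $\prod_{\ell\in\lambda} D_\ell \leq D^k$, a factor that is \emph{independent} of $\lambda$ and can be pulled out of the sum over partitions. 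This is the step that lets $D$ enter the final bound with the clean exponent $k$.

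The second step is to bound the combinatorial sum $\sum_{\lambda \in \Lambda_k} \binom{k}{\lambda} \binom{|\lambda|}{\#\lambda} \binom{t'+1}{|\lambda|}$. The two binomials $\binom{k}{\lambda}$ and $\binom{|\lambda|}{\#\lambda}$ are exactly the multinomial factors counting the number of ordered set partitions of $[k]$ into $|\lambda|$ labelled nonempty blocks whose block-size profile is $\lambda$; summed against $\binom{t'+1}{|\lambda|}$ (the number of ways to place those blocks into $t'+1$ slots), the whole thing should collapse to a count of functions from $[k]$ into $t'+1$ slots that hit at most $k$ slots, i.e.\ it is bounded by $(t'+1)^k$. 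I would justify this either by a direct surjection/counting argument (every assignment of the $k$ derivative-indices to the $t'+1$ factors is counted once) or, more bluntly, by observing $\binom{t'+1}{|\lambda|}\le (t'+1)^{|\lambda|}\le (t'+1)^k$ and bounding the remaining double sum by the total number of set partitions times their labelings, and then checking this stays below $(t'+1)^k$ using $|\lambda|\le k$. Combining, $g(k,t')\leq \gamma^{t'}\Rmax\, D^k (t'+1)^k$.

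The third step is the sum over $t'$. We have
\[
U(k,t) \leq \Rmax\, D^k \sum_{t'=0}^{t-1} \gamma^{t'} (t'+1)^k \leq \Rmax\, D^k\, t^k \sum_{t'=0}^{t-1}\gamma^{t'} \leq \frac{\Rmax\, D^k\, t^k}{1-\gamma},
\]
where I bounded $(t'+1)^k \le t^k$ for $t'\le t-1$ and used the geometric series. This already gives a factor $(Dt)^k$, but the target is $(\gamma D t^2)^k$, so I must absorb an extra $(\gamma t)^k$ and a factor $2$. The main obstacle is precisely this matching: the clean bound above carries no $\gamma^k$ and only $t^k$ rather than $t^{2k}$, so the proof presumably uses the hypothesis $\gamma \ge 2/t$ (equivalently $\gamma t \ge 2$) to write $1 \le (\gamma t/2)^k$ and hence $t^k \le (\gamma t^2/2)^k = (\gamma t^2)^k/2^k$, trading the slack in the hypothesis for the missing powers. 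Tracking the constant $2$ (versus the $2^k$ this naively produces) is the delicate bookkeeping point, and I expect the intended argument keeps one factor of $(\gamma t/2)^{k}\ge 1$ to upgrade $t^k$ to $(\gamma t^2)^k$ while reserving the leftover to produce the stated prefactor $2\Rmax/(1-\gamma)$; I would finish by checking that at $k=0,1$ (the small-order regime the Discussion flags as loose) the inequality still holds, confirming the bound is valid for all $k$ and completing the claim.
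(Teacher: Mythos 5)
Your proof is correct, and the combinatorial core is genuinely different from (and tighter than) the paper's. Both arguments begin identically by pulling out $\prod_{\ell\in\lambda}D_\ell\le D^k$. The paper then converts the sum over partitions into a sum over compositions, applies the multinomial identity to reach $\frac{\Rmax D^k}{1-\gamma}\sum_{r=1}^{\min(k,t)}(\gamma t)^r r^k$, and needs the hypothesis $\gamma t\ge 2$ for a last-term-dominates ratio argument; that step is where both the prefactor $2$ and the exponent $t^{2k}$ come from. You instead observe that the combinatorial weight collapses exactly,
\[
\sum_{\lambda\in\Lambda_k}\binom{k}{\lambda}\binom{|\lambda|}{\#\lambda}\binom{t'+1}{|\lambda|}=(t'+1)^k,
\]
since the three factors enumerate precisely the assignments of the $k$ derivative indices to the $t'+1$ factors, classified by image set and block-size profile. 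This gives $g(k,t')\le\gamma^{t'}\Rmax D^k(t'+1)^k$ and hence $U(k,t)\le\Rmax D^k t^k/(1-\gamma)$ after a plain geometric sum; the hypothesis is then only needed in the weaker form $\gamma t\ge1$ to pad $t^k$ up to $(\gamma t^2)^k$, and the stated factor $2$ is pure slack. Two remarks. First, your ``blunter'' fallback justification of the identity (replacing $\binom{t'+1}{|\lambda|}$ by $(t'+1)^{|\lambda|}$ and hoping the double sum stays below $(t'+1)^k$) fails already at $k=2$, where it produces $(t'+1)+2(t'+1)^2>(t'+1)^2$; only the direct counting argument is valid, so commit to that one. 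Second, your intermediate bound $U(k,t)\le\frac{\Rmax}{1-\gamma}(Dt)^k$ is itself a Gevrey bound with $c=Dt$ rather than $c=\gamma Dt^2$, i.e., stronger than the stated lemma by a factor $(\gamma t)^k$, and it would yield exactly the factor-$T$ improvement in the numerical gradient algorithm that the Discussion conjectures --- worth flagging rather than discarding.
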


\begin{proof}
	By plugging in the bound derived in Lemma~\ref{lem:gevrey-lem1}, we obtain directly that
	\begin{equation}
		\label{eq:gevrey-summation-equation}
		U(k,t) = \sum_{t'=0}^{t-1} g(k,t) \leq \sum_{t'=0}^{t-1} \gamma^{t'}|R|_{\max} \sum_{\lambda \in \Lambda_k} \binom{k}{\lambda} \binom{|\lambda|}{\#\lambda} \binom{t'+1}{|\lambda|} \prod_{\ell \in \lambda} D_{\ell}.
	\end{equation}
	First, for all $\lambda \in \Lambda_k$, we observe that the final product can be upper bounded as
	\[\prod_{\ell \in \lambda} D_{\ell} = \prod_{\ell \in \lambda} (D_{\ell}^{1/\ell})^{\ell} \leq \prod_{\ell \in \lambda} D^{\ell} = D^k.\]
	Next, we can swap the summations in Eq.~(\ref{eq:gevrey-summation-equation}), and after rewriting we obtain
	\begin{equation}
		\label{eq:gevrey-multinomial-thm-setup}
		U(k,t) \leq |R|_{\max}D^k \cdot \sum_{r=1}^k \sum_{\substack{k_1, \dots, k_r \in \N \\ k_1 + \cdots + k_r = k}} \binom{k}{k_1, \dots, k_r} r! \cdot \sum_{t'=0}^{t-1} \gamma^{t'} \binom{t'+1}{r}.
	\end{equation}
	We now focus on the final summation in the above expression. First, we observe that if $t < r$, then all the binomial coefficients evaluate to $0$, and therefore the summation as a whole vanishes as well. Thus, the only terms in the above expression that are non-zero are those where $r \leq t$, which means that we can change the upper limit of summation in the outermost summation to $\min(k,t)$. We can take at least $r$ factors of $\gamma$ out, and as such obtain
	\[\sum_{t'=0}^{t-1} \gamma^{t'} \binom{t'+1}{r} = \gamma^r \sum_{t'=0}^{t-r-1} \gamma^{t'} \binom{t'+r+1}{r} \leq \gamma^r \binom{t}{r} \sum_{t'=0}^{t-r-1} \gamma^{t'} \leq \frac{\gamma^rt^r}{(1-\gamma)r!}.\]
	Plugging this expression back into Eq.~(\ref{eq:gevrey-multinomial-thm-setup}) yields
	\begin{equation}
		\label{eq:gevrey-contrived-ub}
		U(k,t) \leq \frac{|R|_{\max}D^k}{1-\gamma} \cdot \sum_{r=1}^{\min(k,t)} (\gamma t)^r \sum_{\substack{k_1, \dots, k_r \in \N \\ k_1 + \cdots + k_r = k}} \binom{k}{k_1, \dots, k_r} = \frac{|R|_{\max}D^k}{1-\gamma} \cdot \sum_{r=1}^{\min(k,t)} (\gamma t)^r r^k.
	\end{equation}
	In the summation on the right-hand side, the last term is by far the biggest. We can show this crudely by observing that for all $a \geq 2$,
	\[\frac{1}{n^ka^n} \sum_{r=1}^n r^ka^r = \sum_{r=1}^n \left(\frac{r}{n}\right)^k a^{r-n} \leq \sum_{r=1}^n \left(\frac{1}{a}\right)^{n-r} \leq \sum_{r=0}^{n-1} \left(\frac12\right)^r \leq 2.\]
	Thus, by setting $n = \min(k,t)$, and $a = \gamma t$, we obtain that
	\[U(k,t) \leq \frac{2|R|_{\max}}{1-\gamma} \cdot (\gamma Dt^2)^k.\]
	This completes the proof.
\end{proof}

Lemma \ref{lem:gevrey-valuefct} then follows immediately from this lemma and Eq.~(\ref{eq:Ukt-bound}) for $t=T$.

\section{Classical complexity of numerical gradient estimation\label{sec:classical-num}}

In this Appendix, we analyze the complexity of a classical numerical gradient estimation algorithm that relies on the same smoothness conditions of the value function as the quantum algorithm. More precisely, we show the following lemma:

\begin{lemma}\label{lem:classical-num}
Let $\policy$ be a parametrized policy that can be used to interact with an MDP, and that has a bounded smoothness parameter $D$, defined in Eq.~(\ref{eq:D}). The gradient of the value function corresponding to this policy $\grad\valuefct(s_0)$ can be evaluated to error $\varepsilon$ in the $\ell_\infty$-norm, using
\begin{equation}
	\widetilde{\O}\left( d \left(\frac{DT^2\Rmax}{\varepsilon(1-\gamma)}\right)^2\right)
\end{equation}
length-$T$ episodes of interaction with the environment using a classical numerical gradient estimator.
\end{lemma}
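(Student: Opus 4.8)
The plan is to establish a high-probability bound on the sample complexity of a classical finite-difference scheme applied to Monte Carlo estimates of the value function, exploiting the Gevrey smoothness already proven in Lemma~\ref{lem:gevrey-valuefct}. First I would fix the finite-difference method to use: because the value function is Gevrey with $\sigma=0$ (which is the strongest possible smoothness short of being a polynomial), a central-difference formula of sufficiently high order can make the truncation (bias) error decay extremely fast in the grid spacing. Concretely, for each coordinate $i$ I would evaluate $\widetilde{V}_{\policy}$ at a symmetric stencil of points $\params \pm j\,\delta\, e_i$ and combine them with the standard central-difference coefficients, so that the systematic error of the stencil is controlled by the high-order derivatives of $\valuefct$, which the parameter $c = DT^2$ governs via Lemma~\ref{lem:gevrey-valuefct}.

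The key steps in order would be: (1) choose the stencil order and spacing $\delta$ so that the deterministic truncation error of the finite-difference estimate of $\partial_i\valuefct(s_0)$ is at most $\varepsilon/2$, using the Gevrey bound $|\partial_{\alphas}\valuefct(s_0)| \leq \tfrac{M}{2}c^k$ with $M = \tfrac{4\Rmax}{1-\gamma}$ and $c = DT^2$; (2) bound the statistical error contributed by the Monte Carlo estimates $\widetilde{V}_{\policy}$, noting that each such estimate has error scaling like the bound on the return $\tfrac{\Rmax}{1-\gamma}$ from Lemma~\ref{lem:bound-valuefct} divided by $\sqrt{N}$, and that the finite-difference coefficients amplify this noise by a factor depending on the stencil and on $\delta^{-1}$; (3) apply a Hoeffding/Chernoff bound together with a union bound over the $d$ coordinates, exactly as in Lemma~\ref{lem:classical-MVMC}, so that all $d$ partial derivatives are estimated to $\ell_\infty$-error $\varepsilon$ with the desired success probability, incurring only a $\log d$ factor absorbed into $\widetilde{\O}$; and (4) collect the total number of length-$T$ episodes, which is the number of stencil points times $N$ times $d$, and verify it matches $\widetilde{\O}\!\left(d\,(DT^2\Rmax/(\varepsilon(1-\gamma)))^2\right)$.

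The main obstacle I expect is balancing the two competing error sources through the choice of $\delta$: shrinking $\delta$ reduces the truncation error but inflates the variance of the finite-difference estimate (since we divide by powers of $\delta$), so I must pick $\delta$ and the stencil order together so that the noise amplification stays $\polylog$ in the relevant parameters while still killing the bias. The Gevrey $\sigma=0$ condition is precisely what makes this tractable: it lets me take the effective stencil order to be only logarithmic in $1/\varepsilon$ and the smoothness scale $c$, so that $\delta$ can be taken as a constant fraction of $1/c = 1/(DT^2)$ without the truncation error blowing up, and the overall $\delta$-dependent prefactors contribute only logarithmically. Once $\delta$ is pinned down this way, the statistical part reduces to the same concentration argument as in Lemma~\ref{lem:classical-MVMC}, with the bound $B = \tfrac{\Rmax}{1-\gamma}$ and a target per-coordinate precision of order $\varepsilon/(DT^2)$ after accounting for the $\delta^{-1}$ factor, which yields the quadratic-in-$(DT^2\Rmax/(\varepsilon(1-\gamma)))$ and linear-in-$d$ scaling claimed.
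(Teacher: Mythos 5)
Your proposal follows essentially the same route as the paper's proof: a high-order central-difference stencil whose order is taken logarithmic in $\Rmax/(\varepsilon(1-\gamma))$, a spacing $\delta = \Theta(1/(DT^2))$ chosen to balance the Gevrey-controlled truncation error against the Monte Carlo noise amplified by $\delta^{-1}$, and the Hoeffding-plus-union-bound concentration argument of Lemma~\ref{lem:classical-MVMC} applied per coordinate with the extra factor of $d$. The key choices you identify (logarithmic stencil order exploiting $\sigma=0$, per-point precision of order $\varepsilon\delta$, polylogarithmic control of the stencil coefficients) are exactly those made in the paper, so the proposal is correct and not a genuinely different argument.
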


To prove this lemma, we consider a central-difference method that, compared to a simple finite-difference method, can exploit more evaluations of a function $f$ and bounds on its higher-order derivatives to evaluate $f'(x)$ with higher precision. We perform an error analysis of this method and calculate its query complexity for functions $f$ that cannot be evaluated exactly but only through Monte Carlo estimation (such as value functions).

\subsection{Central difference numerical differentiation}

Suppose that we can evaluate a function $f:\mathbb{R}\rightarrow \mathbb{R}$ that is $k$ times differentiable at some point $x\in\mathbb{R}$, with $f^{(k-1)}$ continuous on some interval around $x$. For $\delta\in\mathbb{R}$ such that $x+\delta$ is in this interval, Taylor's theorem (with the Lagrange formulation of the remainder) gives us:
\begin{equation}
f\left(x+\delta\right) = f(x) + f'(x) \delta + \frac{f''(x)}{2!} \delta^2 + \ldots + \frac{f^{(k-1)}(x)}{(k-1)!} \delta^{k-1} + \frac{f^{(k)}(\xi)}{k!} \delta^{k}
\end{equation}
for a $\xi \in \left[x,x+\delta\right]$.\\
For $k=2$ specifically, this expression becomes:
\begin{equation}
\begin{cases}
f\left(x+\delta\right) = f(x) + f'(x) \delta + \frac{f''(\xi_+)}{2!} \delta^2,\\
f\left(x-\delta\right) = f(x) - f'(x) \delta + \frac{f''(\xi_-)}{2!} \delta^2,
\end{cases}
\end{equation}
for some $\xi_+,\xi_- \in \left[x,x+\delta\right]$.

The central difference method for numerical differentiation uses the following formula, derived from the expressions above:
\begin{equation}
f'(x) = \frac{f\left(x+\delta\right) - f\left(x-\delta\right)}{2\delta} + \frac{f''(\xi_+)-f''(\xi_-)}{4}\delta.
\end{equation}
When a bound $C_2$ for $f''$ is known on the interval $[x-\delta,x+\delta]$, the remainder term can be bounded by
\begin{equation}
\abs{\frac{f''(\xi_+)-f''(\xi_-)}{4}\delta} \leq \frac{C_2}{2} \delta.
\end{equation}
The method can be generalized to use higher order derivatives (up to some $k\in\N$), such that $f'(x)$ is now of the form
\begin{equation}
f'(x) = \sum_{l=-m}^{m} \underbrace{\frac{a_l^{(2m)} f(x+l\delta)}{\delta}}_{f_l} + \underbrace{\sum_{l=-m}^{m} a_l^{(2m)} \frac{f^{(k)}(\xi_l)}{k!} l^k\delta^{k-1}}_{R_k}
\end{equation}
for $m = \lfloor \frac{k-1}{2}\rfloor$ and where
\begin{equation}\label{eq:al}
a_l^{(2m)}=
\begin{cases}
1 & \text{if }l=0,\\
\frac{(-1)^{l+1}(m!)^2}{l(m+l)!(m-l)!} & \text{otherwise}.
\end{cases}
\end{equation}

\subsection{Bounding the errors}

When a bound $C_k$ for $f^{(k)}$ is known on the interval $[x-m\delta,x+m\delta]$, the remainder term $R_k$ can be bounded by
\begin{equation}\label{eq:remainder-order-k}
\abs{R_k} \leq \abs{\sum_{l=-m}^{m} a_l^{(2m)}l^k} \frac{C_k}{k!}\delta^{k-1} \leq 2m^k\frac{C_k}{k!}\delta^{k-1}
\end{equation}
where the last inequality comes from Theorem 3.4 in \cite{cornelissen19}.

In order for $\abs{R_k} \leq \frac{\varepsilon}{2}$, we then need
\begin{equation}
\delta \leq \sqrt[k-1]{\frac{k!\varepsilon}{4m^kC_k}}.
\end{equation}

We take
\begin{align}
	\delta &= \frac{2}{e}\left(\frac{\varepsilon}{4C_k}\right)^{1/k}\label{eq:delta}\\
	&\leq (2\pi k)^{\frac{1}{2k}}\frac{k}{me}\left(\frac{\varepsilon}{4C_k}\right)^{1/k}\\
	&\leq \left(\frac{\sqrt{2\pi k}(k/e)^k\varepsilon}{4m^k C_k}\right)^{1/k}\\
	&\leq \sqrt[k]{\frac{k!\varepsilon}{4m^kC_k}}\\
	&\leq \sqrt[k-1]{\frac{k!\varepsilon}{4m^kC_k}}.
\end{align}

Moreover, we are interested in the case where $f$ cannot be evaluated exactly, but rather when we have access to random samples whose expectation value is $f(x)$ (and are bounded by $C_0$). If we want to estimate each $f_l$, $l=-m, \ldots, m$, to precision $\frac{\varepsilon}{2k}$ (such that we get their sum to precision $\frac{\varepsilon}{2}$), it is sufficient to estimate each $f(x+l\delta)$ to precision $\frac{\varepsilon\delta}{a_l^{(2m)}2k}$. From Lemma \ref{lem:classical-MVMC}, we have that this requires a total number of queries (or samples) that scales as
\begingroup
\allowdisplaybreaks
\begin{align}
	\widetilde{\O}\left(\sum_{l=-m}^{m} \left(\frac{C_0 k a_l^{(2m)}}{\varepsilon\delta}\right)^2\right) &\leq \widetilde{\O}\left(\left(\frac{C_0 k}{\varepsilon\delta}\right)^2 \sum_{l=-m}^{m} \abs{a_l^{(2m)}}\right)\\
	&\leq \widetilde{\O}\left(\left(\frac{C_0 k}{\varepsilon\delta}\right)^2 \left(1+2\sum_{l=1}^{m}\frac{1}{l}\right)\right)\\
	&\leq \widetilde{\O}\left(\left(\frac{C_0 k}{\varepsilon\delta}\right)^2 \left(3+2\log(m)\right)\right)\\
	&\leq \widetilde{\O}\left(\left(\frac{C_0 k}{\varepsilon\delta}\right)^2\right)\label{eq:nb-samples}
\end{align}
\endgroup
where the first two inequalities follow from $a_0^{(2m)} = 1$ (Eq.~(\ref{eq:al})) and $\abs{a_l^{(2m)}} \leq \frac{1}{\abs{l}}, l\in\{-m,\ldots,m\}\backslash \{0\}$ (Theorem 3.4 in \cite{cornelissen19}), and the third inequality follows from a simple upper bound on harmonic numbers.

Combining Eqs.~(\ref{eq:delta}) and (\ref{eq:nb-samples}), we find that a total of
\begin{equation}\label{eq:query-complexity}
	\widetilde{\O}\left(\left(\frac{C_0k}{\varepsilon}\left(\frac{C_k}{\varepsilon}\right)^{1/k}\right)^2\right)
\end{equation}
queries are sufficient to estimate $f'(x)$ to precision $\varepsilon$. 

\subsection{Application to value functions}

In the case of value functions, we have $C_k = \frac{2\Rmax}{1-\gamma} \left(DT^2\right)^{k} \forall k\in\N$ (see Lemma \ref{lem:gevrey-valuefct}). Therefore, we can choose
\begin{equation}
	k = \log \left(\frac{2\Rmax}{\varepsilon(1-\gamma)}\right)
\end{equation}
and use the identity $x^{1/\log(x)} = e^{\log(x)/\log(x)}=e$, such that, from Eq.~(\ref{eq:query-complexity}):
\begin{equation}
	\widetilde{\O}\left(d\left(\frac{DT^2\Rmax}{\varepsilon(1-\gamma)}\right)^2\right)
\end{equation}
queries are sufficient to estimate the gradient $\grad\valuefct$ to $\varepsilon$ precision in the $\ell_\infty$-norm. The multiplicative factor $d$ comes from the fact that we need to estimate each of the $d$ coordinates of the gradient independently.

\end{document}